\newtheorem{thm}{Theorem}
\newtheorem{cor}{Corollary}
\theoremstyle{definition}
\newtheorem{lcon}{Linearity Condition}
\theoremstyle{remark}
\begin{document}

\title{Cluster-Based Regularized Sliced Inverse Regression for Forecasting Macroeconomic Variables}%

%
%

\date{}

\author[1]{Yue Yu}
\author[2]{Zhihong Chen\thanks{Corresponding author. Zhihong Chen acknowledges financial support from the NSF of China (Grant 71101030) and the Program for Innovative Research Team in UIBE(Grant CXTD4-01).}}
\author[3]{Jie Yang}
\affil[1]{TradeLink L.L.C., Chicago, Illinois.} \affil[2]{School of International Trade and
Economics, University of International Business and Economics,
China.}\affil[3]{Department of Mathematics,
Statistics, and Computer Science, University of Illinois at Chicago.}

\renewcommand\Authands{, and }

\maketitle
\clearpage
 \begin{abstract}
 This article concerns the dimension reduction in regression for large data set. We introduce a new method based on the sliced inverse regression approach, called cluster-based regularized sliced inverse regression. Our method not only keeps the merit of considering both response and predictors' information, but also enhances the capability of handling highly correlated variables. It is justified under certain linearity conditions. An empirical application on a macroeconomic data set shows that our method has outperformed the dynamic factor model and other shrinkage methods.
 \end{abstract}

\textbf{Key words:} sliced inverse regression, cluster-based, forecast, macroeconomics

\section{Introduction}

\noindent Forecasting using many predictors has received a good deal of attention in recent years. The curse of dimensionality has been turned into a blessing with the abundant information in large datasets. Various methods have been originated to extract efficient predictors, for example, dynamic factor model (DFM), Bayesian model averaging, Lasso, boosting, etc. Among them, dynamic factor model is conceptually appealing in macroeconomics because it is structurally consistent with log-linearlized models such as dynamic stochastic general equilibrium models.

\citet{boivin2005understanding} assessed the extent to which the forecasts are influenced by how the factors are estimated and/or how the forecasts are formulated. They did not find one method that always stands out to be systematically good or bad. Meta-study from \citet{eickmeier2008successful} also found mixed performance of DFM forecasts. \citet{stock2011generalized} compared the dynamic factor model with some recent multi-predictor methods. They concluded that the dynamic factor model could not be outperformed by these methods for all the forecasting series in their data set.

The recent development in statistics provides a new method of dimension reduction in regression for large-dimensioned data. The literature stems from \citet{duan1991slicing}, and \citet{li1991sliced}, which proposed a new way of thinking in the regression analysis, called sliced inverse regression (SIR). SIR reverses the role of response $y$ and predictors $\mathbf{x}$. Classical regression methods mainly deal with the conditional density ${\bf f}(y|\mathbf{x})$. SIR collects the information of the variation of predictors $\mathbf{x}$ along with the change of the response $y$, by exploring the conditional density $h(\mathbf{x}|y)$. Usually the dimension of the response is far more less than the dimension of the predictors, hence, it is a way to avoid the ``curse of dimensionality''.

The traditional SIR does not work well for highly correlated data, due to the degenerate covariance matrix. This is not feasible when the number of variables $N$ is greater than the number of observations $T$, which happens a lot in economics studies. In addition, the economic variables are often highly correlated or inversely correlated, due to the derivation formula, data sources, and grouping category, for instance, personal consumption expenditures (PCE) and consumer price index (CPI), total employees and unemployment rate, etc. This makes the covariance matrix ill-conditioned, causes the inverse matrix lack of precision and too sensitive to the variation of matrix entries, and leads to unstable results with large standard deviations. There are some extensions of SIR for the highly collinearity data and ``$T < N$'' problems, for example, regularized sliced inverse regression (\citet{zhong2005rsir}, \citet{li2008sliced}) and partial inverse regression (\citet{li2007partial}).

In this article, we propose a new method of dimension reduction, called the cluster-based sliced inverse regression (CRSIR) method, for many predictors in a data rich environment. We evaluate its properties theoretically and use it for forecasting macroeconomic series. Comparison in terms of both in-sample prediction and out-of-sample forecasting simulation shows the advantage of our method.

The remaining of the article is organized as follows. Section 2 introduces cluster-based SIR method with its statistical property. An empirical application on the macroeconomic dataset used by \citet{stock2011generalized} is given in Section 3. Conclusions with some discussions are provided in Section 4.

\section{Modeling and methods}

\noindent The regression model in \citet{li1991sliced} takes the form of
\begin{equation}\label{1}
y=g(\boldsymbol{\beta}_1'\mathbf{x},\boldsymbol{\beta}_2'\mathbf{x},\dots,\boldsymbol{\beta}_K'\mathbf{x},\boldsymbol{\epsilon}),
\end{equation}
where the response $y$ is univariate, $\mathbf{x}$ is an $N$-dimensional vector, and the random error $\boldsymbol{\epsilon}$ is independent of $\mathbf{x}$. Figure \ref{f0} gives a straightforward illustration of Model \eqref{1}, which means that $y$ depends on $\mathbf{x}$ only through the $K$-dimensional subspace spanned by projection vectors $\boldsymbol{\beta}_1, \dots, \boldsymbol{\beta}_K$, known as the effective dimension reducing directions (e.d.r.-directions) (\citet{li1991sliced}).

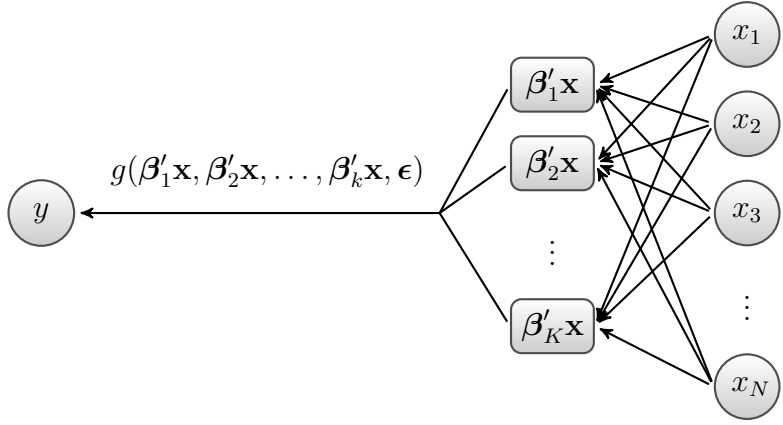
\begin{figure}[!ht]
\begin{center}
\tikzset{
    >=stealth',
    circ/.style={
       circle,
           draw=black!70, thick,
       top color=white,
       bottom color=black!20,
           text width=1em,
           text centered},
    punkt/.style={
       rectangle,
           rounded corners,
       draw=black!70,
       top color=white,
       bottom color=black!20,
           thick,
           text width=2em,
           text centered},
    pil/.style={
           ->,
           thick,
           shorten <=2pt,
           shorten >=2pt,}
}
\begin{tikzpicture}[node distance=1cm, auto,]
 \node[ ](x0){};
 \node[circ, below=0.2cm of x0](x1){$x_1$};
 \node[circ, below=0.3cm of x1](x2){$x_2$};
 \node[circ, below=0.3cm of x2](x3){$x_3$};
 \node[below=0.3cm of x3](x4){$\vdots$};
 \node[circ, below=0.3cm of x4](x5){$x_N$};
 \node[left=2cm of x1](bx0){};
 \node[punkt, below=0.15cm of bx0](bx1){$\boldsymbol{\beta}_1'\mathbf{x}$};
 \node[punkt, below=0.3cm of bx1](bx2){$\boldsymbol{\beta}_2'\mathbf{x}$};
 \node[below=0.3cm of bx2](bx3){$\vdots$};
 \node[punkt, below=0.3cm of bx3](bx4){$\boldsymbol{\beta}_K'\mathbf{x}$};
 \node[left=3.5cm of x3](nul1){};
 \node[circ,left=4.7cm of nul1](y){$y$};
 \node[left=3.5cm of bx0](nul2){};
 \node[below=1.3cm of nul2](form){$g(\boldsymbol{\beta}_1'\mathbf{x},\boldsymbol{\beta}_2'\mathbf{x},\dots,\boldsymbol{\beta}_k'\mathbf{x},\boldsymbol{\epsilon})$};

 \path (x1.west) edge[pil] (bx1.east);
 \path (x1.west) edge[pil] (bx2.east);
 \path (x1.west) edge[pil] (bx4.east);
 \path (x2.west) edge[pil] (bx1.east);
 \path (x2.west) edge[pil] (bx2.east);
 \path (x2.west) edge[pil] (bx4.east);
 \path (x3.west) edge[pil] (bx1.east);
 \path (x3.west) edge[pil] (bx2.east);
 \path (x3.west) edge[pil] (bx4.east);
 \path (x5.west) edge[pil] (bx1.east);
 \path (x5.west) edge[pil] (bx2.east);
 \path (x5.west) edge[pil] (bx4.east);
 \path (bx1.west) edge[-,thick,shorten <= 2pt] (nul1.center);
 \path (bx2.west) edge[-,thick,shorten <= 2pt] (nul1.center);
 \path (bx4.west) edge[-,thick,shorten <= 2pt] (nul1.center);
 \path (nul1.center) edge[pil, shorten <=0pt] (y.east);

\end{tikzpicture}
\caption{Regression Model \eqref{1} Using E.D.R.-Directions}\label{f0}
\end{center}
\end{figure}

Many methods can be used to find the e.d.r.-directions, for example, principal component analysis might be the most commonly used one in economics. But unlike these methods, SIR not only reduces dimensions in regression but also integrates the information from both predictors and response. Moreover, different from the classical regression methods, SIR intends to collect information on how $\mathbf{x}$ changes along with $y$. That is to say, instead of estimating the forward regression function ${\boldsymbol\eta}(\mathbf{x})=E(y|\mathbf{x})$, inverse regression considers ${\boldsymbol\xi}(y)=E(\mathbf{x}|y)$. Compared with ${\boldsymbol\eta}(\mathbf{x})$, the inverse regression function $\boldsymbol\xi(y)$ depends on one-dimensioned $y$, which makes the operation much easier.

\citet{li1991sliced} showed that using SIR method, the e.d.r.-directions can be estimated by solving
\begin{equation}\label{2}
\mathrm{Cov}\big(E(\mathbf{x}|y)\big)\boldsymbol{\beta}_j=\nu_j \mathrm{Cov}(\mathbf{x}) \boldsymbol{\beta}_j,
\end{equation}
where $\nu_j$ is the $j$th eigenvalue and $\boldsymbol{\beta}_j$ is the corresponding eigenvector of $\mathrm{Cov}\big(E(\mathbf{x}|y)\big)$ with respect to $\mathrm{Cov}(\mathbf{x})$. During the forecasting procedure, the covariance matrices can be replaced by their usual moment estimates.

One of the key parameters used in SIR is the number of slices $H$. However, it is not crucial for larger sample sizes, \citet{li2000high} indicated that for a sample size $n=300$, $H$ can be chosen between 10 to 20, and SIR outputs do not change much for a wide range of $H$. Accordingly, we fix $H=10$ throughout this article.

\subsection{Cluster-based sliced inverse regression}

In this section, we introduce clustering methodology with the sliced inverse regression to improve the performance of SIR on collinear data.

Assuming that the variables of interest can be clustered into
several blocks, so that two variables within the same block are
correlated to each other, and any two variables belonging to
different blocks are independent. In practice, an orthogonalization
procedure can be applied to reduce the correlations between blocks
in order to fit our assumption. Thus, we can cluster the variables
according to their correlations in order to find the
e.d.r-directions, because there is no shared information between
clusters.

The clustering method we use is hierarchical clustering (\citet{ward1963hierarchical}) with complete linkage. The dissimilarity is defined as $1-|\textrm{Correlation}|$.

The algorithm for the cluster-based SIR method can be described as following.
\begin{enumerate}
  \item Standardize each explanatory variable to zero mean and unit variance.
  \item Cluster $\mathbf{x}\ (N\times 1)$ into $\big(\begin{array}{ccc}\mathbf{x_1}' & \cdots & \mathbf{x_c}'\end{array}\big)'$ based on the correlation matrix of $\mathbf{x}$, where $\mathbf{x_i}$ is $N_i \times 1$, $\sum_{i=1}^{c}N_i=N$, and $c$ is the number of clusters, which will be determined by cross-validation.
  \item Restricted to each cluster, perform SIR method and pick up $k_i$ SIR directions based on the sequential chi-square test (Li, 1991), say $\boldsymbol{\theta}_j^{(i)},\ j=1,\dots,k_i,\ i=1,\dots,c$.
  \item Collect all the SIR variates obtained from the clusters, say $\{\boldsymbol{\theta}_j^{(i)\prime}\mathbf{x}_i, i=1,2,\dots,c, j=1,2,\dots,k_i\}$.
  \item Let $\boldsymbol{\lambda}_l=\big(\begin{array}{ccc}\mathbf{0_1}' &\boldsymbol{\theta}_j^{(i)\prime}& \mathbf{0_2}'\end{array}\big)'$, $l=1,\dots,m$, $m=\sum_{i=1}^{c}k_i$, where $\mathbf{0_1}$ and $\mathbf{0_2}$ are zero column vectors with dimension $\sum_{k=1}^{i-1}N_k$ and $\sum_{k=i+1}^{c}N_k$, respectively. Denote $\Lambda=(\boldsymbol{\lambda}_1,\boldsymbol{\lambda}_2,\dots, \boldsymbol{\lambda}_m)$. The variates $\{\boldsymbol{\theta}_j^{(i)\prime}\mathbf{x}_i\}$ can be written in a vector form as $(\boldsymbol{\lambda}_1'\mathbf{x},\dots,\boldsymbol{\lambda}_m'\mathbf{x})'=\Lambda'\mathbf{x}$.
  \item Then, perform SIR method one more time to the pooled variates $\Lambda'\mathbf{x}$ to reduce dimensions further, and get the e.d.r.-directions $(\boldsymbol{\gamma}_1, \boldsymbol{\gamma}_2,\dots,\boldsymbol{\gamma}_v)$, where $v$ is also determined by the sequential chi-square test. Denote $\Gamma=(\boldsymbol{\gamma}_1, \boldsymbol{\gamma}_2,\dots,\boldsymbol{\gamma}_v)$, the final CRSIR variates we chose are $\Gamma'\Lambda'\mathbf{x}$.
  \item Estimate the values of forecasting series using the CRSIR variates $\Gamma'\Lambda'\mathbf{x}$. Linear model with ordinary least squares (OLS) is used in this article, and as to be shown later, it is sufficiently good for our method.
\end{enumerate}
Note that the matrices $\Gamma$ is $m \times v$, $\Lambda$ is $N \times m$, so $\Gamma'\Lambda'\mathbf{x}$ is $v \times 1$. Therefore, we only use $v$ factors to build the final model for forecasting $y$, instead of using $N$ variables based on the original dataset.

\subsection{Statistical property of cluster-based SIR}

\citet{li1991sliced} established the unbiasedness for the e.d.r.-directions found by SIR, assuming the following linearity condition.
\begin{lcon}\label{con0}
For any $\mathbf{b} \in \mathbb{R}^N$, the conditional expectation $E(\mathbf{b'x}|\allowbreak \boldsymbol{\beta}_1'\mathbf{x},\dots, \boldsymbol{\beta}_K'\mathbf{x})$ is linear in $\boldsymbol{\beta}_1'\mathbf{x},\dots,\boldsymbol{\beta}_K'\mathbf{x}$.
\end{lcon}
The linearity condition is not easy to verify, however, \citet{eaton1986characterization} showed when $\mathbf{x}$ is elliptically symmetrically distributed, for example, multivariate normally distributed, the linearity condition holds. Furthermore, \citet{hall1993almost} showed that elliptical symmetric distribution is not a restrictive assumption, because the linearity condition holds approximately when $N$ is large even if the dataset has not been generated from an elliptically symmetric distribution.

Without loss of generality, we assume each variable in ${\bf x}$ has been standardized to zero mean and unit variance for our discussion. \citet{li1991sliced} proved the following theorem,
\begin{thm}\label{thm0}
Assume Linearity Condition \ref{con0}, the centered inverse regression curve $E(\mathbf{x}|y)$ is contained in the space spanned by $\Sigma_{\mathbf{x}}\boldsymbol{\beta}_j,\ j=1,\dots,K$, where $\Sigma_{\mathbf{x}}$ is the covariance matrix of $\mathbf{x}$.
\end{thm}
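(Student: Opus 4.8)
The plan is to prove the inclusion by characterizing the target subspace through its orthogonal complement. Writing $B=(\boldsymbol{\beta}_1,\dots,\boldsymbol{\beta}_K)$, it suffices to show that every vector $\mathbf{b}\in\mathbb{R}^N$ satisfying $\mathbf{b}'\Sigma_{\mathbf{x}}\boldsymbol{\beta}_j=0$ for all $j=1,\dots,K$ is orthogonal to the centered curve, i.e.\ $\mathbf{b}'E(\mathbf{x}|y)=E(\mathbf{b}'\mathbf{x}|y)=0$ for almost every value of $y$. Since the set of such $\mathbf{b}$ is precisely the orthogonal complement of $\mathrm{span}\{\Sigma_{\mathbf{x}}\boldsymbol{\beta}_1,\dots,\Sigma_{\mathbf{x}}\boldsymbol{\beta}_K\}$, establishing this for all such $\mathbf{b}$ forces $E(\mathbf{x}|y)$ into the span.

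First I would fix such a $\mathbf{b}$ and apply Linearity Condition \ref{con0} to the linear functional $\mathbf{b}'\mathbf{x}$, which yields constants $c_0,c_1,\dots,c_K$ with $E(\mathbf{b}'\mathbf{x}\mid\boldsymbol{\beta}_1'\mathbf{x},\dots,\boldsymbol{\beta}_K'\mathbf{x})=c_0+\sum_{j=1}^K c_j\,\boldsymbol{\beta}_j'\mathbf{x}$. Because $\mathbf{x}$ is centered, taking expectations gives $c_0=0$. To pin down the remaining coefficients I would take the covariance of both sides with each $\boldsymbol{\beta}_k'\mathbf{x}$: the tower property turns the left side into $\mathrm{Cov}(\mathbf{b}'\mathbf{x},\boldsymbol{\beta}_k'\mathbf{x})=\mathbf{b}'\Sigma_{\mathbf{x}}\boldsymbol{\beta}_k=0$, while the right side becomes $\sum_{j=1}^K c_j\,\boldsymbol{\beta}_j'\Sigma_{\mathbf{x}}\boldsymbol{\beta}_k$. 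Collecting these $K$ equations reads $(B'\Sigma_{\mathbf{x}}B)\mathbf{c}=\mathbf{0}$ with $\mathbf{c}=(c_1,\dots,c_K)'$, and since $\Sigma_{\mathbf{x}}$ is positive definite and the $\boldsymbol{\beta}_j$ are linearly independent, $B'\Sigma_{\mathbf{x}}B$ is invertible, forcing $\mathbf{c}=\mathbf{0}$. Hence $E(\mathbf{b}'\mathbf{x}\mid\boldsymbol{\beta}_1'\mathbf{x},\dots,\boldsymbol{\beta}_K'\mathbf{x})=0$ almost surely.

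The final and most delicate step is to pass from conditioning on the projections $\boldsymbol{\beta}_1'\mathbf{x},\dots,\boldsymbol{\beta}_K'\mathbf{x}$ to conditioning on the response $y$. Here I would exploit the model structure \eqref{1}: since $y=g(\boldsymbol{\beta}_1'\mathbf{x},\dots,\boldsymbol{\beta}_K'\mathbf{x},\boldsymbol{\epsilon})$ with $\boldsymbol{\epsilon}$ independent of $\mathbf{x}$, the response is a function of the projections and of noise that carries no information about $\mathbf{x}$ beyond what the projections already encode. This gives the conditional independence of $\mathbf{x}$ and $y$ given $(\boldsymbol{\beta}_1'\mathbf{x},\dots,\boldsymbol{\beta}_K'\mathbf{x})$, so that $E(\mathbf{b}'\mathbf{x}\mid\boldsymbol{\beta}_1'\mathbf{x},\dots,\boldsymbol{\beta}_K'\mathbf{x},y)=E(\mathbf{b}'\mathbf{x}\mid\boldsymbol{\beta}_1'\mathbf{x},\dots,\boldsymbol{\beta}_K'\mathbf{x})=0$. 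Applying the tower property once more, $E(\mathbf{b}'\mathbf{x}\mid y)=E\big(E(\mathbf{b}'\mathbf{x}\mid\boldsymbol{\beta}_1'\mathbf{x},\dots,\boldsymbol{\beta}_K'\mathbf{x},y)\mid y\big)=0$, which completes the argument.

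I expect the conditional-independence step to be the crux: the algebra forcing $\mathbf{c}=\mathbf{0}$ is routine, but carefully justifying that $y$ adds nothing to the conditional mean of $\mathbf{b}'\mathbf{x}$ once the projections are fixed requires invoking $\boldsymbol{\epsilon}\perp\mathbf{x}$ together with the specific single-index-type form of \eqref{1}. I would state the conclusion as an almost-sure identity and note that the whole argument needs only $\Sigma_{\mathbf{x}}$ positive definite and the linearity condition, with no assumption on the shape of $g$.
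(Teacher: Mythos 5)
Your proof is correct, but it takes a genuinely different route from the one the paper relies on. The paper does not prove Theorem \ref{thm0} by an annihilator argument: it defers to \citet{li1991sliced}, and in the Appendix records the constructive proof of \citet{li2000high}, namely that under the linearity condition the inverse regression curve admits the explicit representation $E(\mathbf{x}|y)=\Sigma_{\mathbf{x}}B\,\kappa(y)$ with $B=(\boldsymbol{\beta}_1,\dots,\boldsymbol{\beta}_K)$ and coefficient function $\kappa(y)=(B'\Sigma_{\mathbf{x}}B)^{-1}E(B'\mathbf{x}|y)$. You instead work dually: you characterize $\mathrm{span}\{\Sigma_{\mathbf{x}}\boldsymbol{\beta}_j\}$ through its orthogonal complement and show every annihilating $\mathbf{b}$ satisfies $E(\mathbf{b}'\mathbf{x}|y)=0$ almost surely. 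The two arguments rest on the same pillars --- covariance matching to pin down the coefficients produced by Linearity Condition \ref{con0} (your system $(B'\Sigma_{\mathbf{x}}B)\mathbf{c}=\mathbf{0}$ is exactly the computation that yields $(B'\Sigma_{\mathbf{x}}B)^{-1}$ in the explicit formula), and the conditional independence of $\mathbf{x}$ and $y$ given $(\boldsymbol{\beta}_1'\mathbf{x},\dots,\boldsymbol{\beta}_K'\mathbf{x})$, which follows from the model \eqref{1} with $\boldsymbol{\epsilon}$ independent of $\mathbf{x}$, combined with the tower property. What your version buys is a self-contained and minimal-hypothesis proof that makes explicit the step the paper's citation-level treatment never spells out (the passage from conditioning on the projections to conditioning on $y$); you are right that this is the delicate point, and your handling of it is sound. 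What the paper's version buys is the explicit $\kappa(y)$, which is precisely what the Appendix exploits to prove Theorem \ref{thm4}: the CRSIR argument chains the representations as $E(\mathbf{x}|y)=\Sigma_{\mathbf{x}}\Lambda(\Lambda'\Sigma_{\mathbf{x}}\Lambda)^{-1}\Lambda'\Sigma_{\mathbf{x}}\Lambda\Gamma\kappa_2(y)=\Sigma_{\mathbf{x}}\Lambda\Gamma\kappa_2(y)$, a cancellation that needs the formula itself and not merely the containment your dual argument establishes. So your proof is a fully valid replacement for Theorem \ref{thm0}, but if one wanted to rebuild the paper's Theorem \ref{thm4} on top of it, the two-stage composition would have to be rederived, e.g.\ by applying your annihilator argument once more to the pooled variates $\Lambda'\mathbf{x}$.
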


Figure \ref{f0.5} shows a three-dimensional case when $\mathbf{x}=\big(x_1,x_2,x_3\big)'$, since the inverse regression function $E(\mathbf{x}|y)$ is a function of $y$, it draws a curve in the three-dimensional space when $y$ changes. Theorem \ref{thm0} indicates that such curve is located exactly on the plane spanned by two directions $d_1$ and $d_2$ from $\Sigma_{\mathbf{x}}\boldsymbol{\beta}_j,\ j=1,2$,  assuming $K=2$.

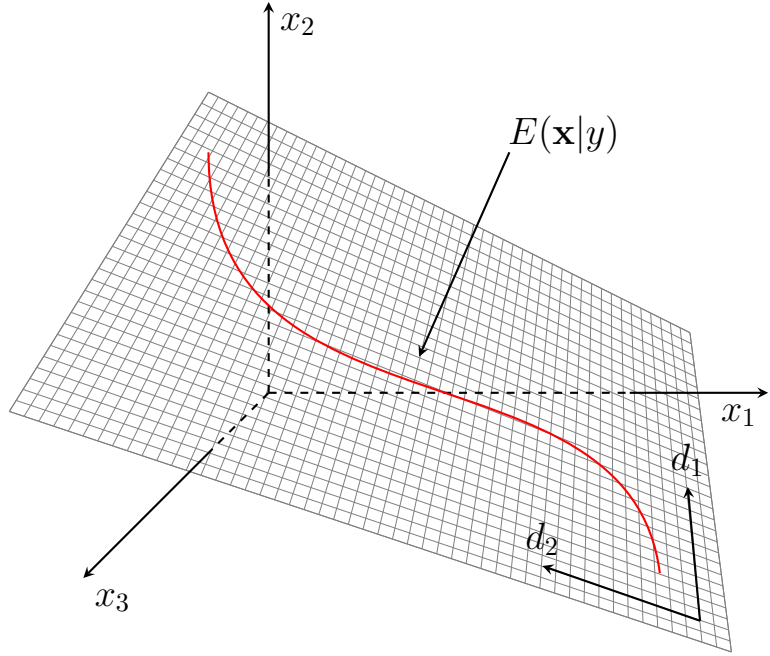
\begin{figure}[!tpb]
\begin{center}
\tdplotsetmaincoords{60}{110}
\pgfmathsetmacro{\rvec}{.8}
\pgfmathsetmacro{\thetavec}{30}
\pgfmathsetmacro{\phivec}{60}
\begin{tikzpicture}[scale=0.8]
\tikzstyle{every node}=[font=\large]
\coordinate (O) at (0,0,0);
\tdplotsetcoord{P}{\rvec}{\thetavec}{\phivec}
\draw[gray, thin,-] (-4,5,0) -- (4,1,0);
\draw[gray, thin, -] (-5,2,6) -- (7,-2,6);
\draw[gray, thin, -] (-4,5,0) -- (-5,2,6);
\draw[gray, thin, -] (7,-2,6) -- (4,1,0);
\draw[gray, very thin, -] (-4,5,0) -- (-5,2,6);
\draw[gray, very thin, -] (-3.8367,4.9184,0) -- (-4.7551,1.9184,6);
\draw[gray, very thin, -] (-3.6735,4.8367,0) -- (-4.5102,1.8367,6);
\draw[gray, very thin, -] (-3.5102,4.7551,0) -- (-4.2653,1.7551,6);
\draw[gray, very thin, -] (-3.3469,4.6735,0) -- (-4.0204,1.6735,6);
\draw[gray, very thin, -] (-3.1837,4.5918,0) -- (-3.7755,1.5918,6);
\draw[gray, very thin, -] (-3.0204,4.5102,0) -- (-3.5306,1.5102,6);
\draw[gray, very thin, -] (-2.8571,4.4286,0) -- (-3.2857,1.4286,6);
\draw[gray, very thin, -] (-2.6939,4.3469,0) -- (-3.0408,1.3469,6);
\draw[gray, very thin, -] (-2.5306,4.2653,0) -- (-2.7959,1.2653,6);
\draw[gray, very thin, -] (-2.3673,4.1837,0) -- (-2.551,1.1837,6);
\draw[gray, very thin, -] (-2.2041,4.102,0) -- (-2.3061,1.102,6);
\draw[gray, very thin, -] (-2.0408,4.0204,0) -- (-2.0612,1.0204,6);
\draw[gray, very thin, -] (-1.8776,3.9388,0) -- (-1.8163,0.9388,6);
\draw[gray, very thin, -] (-1.7143,3.8571,0) -- (-1.5714,0.8571,6);
\draw[gray, very thin, -] (-1.551,3.7755,0) -- (-1.3265,0.7755,6);
\draw[gray, very thin, -] (-1.3878,3.6939,0) -- (-1.0816,0.6939,6);
\draw[gray, very thin, -] (-1.2245,3.6122,0) -- (-0.8367,0.6122,6);
\draw[gray, very thin, -] (-1.0612,3.5306,0) -- (-0.5918,0.5306,6);
\draw[gray, very thin, -] (-0.898,3.449,0) -- (-0.3469,0.449,6);
\draw[gray, very thin, -] (-0.7347,3.3673,0) -- (-0.102,0.3673,6);
\draw[gray, very thin, -] (-0.5714,3.2857,0) -- (0.1429,0.2857,6);
\draw[gray, very thin, -] (-0.4082,3.2041,0) -- (0.3878,0.2041,6);
\draw[gray, very thin, -] (-0.2449,3.1224,0) -- (0.6327,0.1224,6);
\draw[gray, very thin, -] (-0.0816,3.0408,0) -- (0.8776,0.0408,6);
\draw[gray, very thin, -] (0.0816,2.9592,0) -- (1.1224,-0.0408,6);
\draw[gray, very thin, -] (0.2449,2.8776,0) -- (1.3673,-0.1224,6);
\draw[gray, very thin, -] (0.4082,2.7959,0) -- (1.6122,-0.2041,6);
\draw[gray, very thin, -] (0.5714,2.7143,0) -- (1.8571,-0.2857,6);
\draw[gray, very thin, -] (0.7347,2.6327,0) -- (2.102,-0.3673,6);
\draw[gray, very thin, -] (0.898,2.551,0) -- (2.3469,-0.449,6);
\draw[gray, very thin, -] (1.0612,2.4694,0) -- (2.5918,-0.5306,6);
\draw[gray, very thin, -] (1.2245,2.3878,0) -- (2.8367,-0.6122,6);
\draw[gray, very thin, -] (1.3878,2.3061,0) -- (3.0816,-0.6939,6);
\draw[gray, very thin, -] (1.551,2.2245,0) -- (3.3265,-0.7755,6);
\draw[gray, very thin, -] (1.7143,2.1429,0) -- (3.5714,-0.8571,6);
\draw[gray, very thin, -] (1.8776,2.0612,0) -- (3.8163,-0.9388,6);
\draw[gray, very thin, -] (2.0408,1.9796,0) -- (4.0612,-1.0204,6);
\draw[gray, very thin, -] (2.2041,1.898,0) -- (4.3061,-1.102,6);
\draw[gray, very thin, -] (2.3673,1.8163,0) -- (4.551,-1.1837,6);
\draw[gray, very thin, -] (2.5306,1.7347,0) -- (4.7959,-1.2653,6);
\draw[gray, very thin, -] (2.6939,1.6531,0) -- (5.0408,-1.3469,6);
\draw[gray, very thin, -] (2.8571,1.5714,0) -- (5.2857,-1.4286,6);
\draw[gray, very thin, -] (3.0204,1.4898,0) -- (5.5306,-1.5102,6);
\draw[gray, very thin, -] (3.1837,1.4082,0) -- (5.7755,-1.5918,6);
\draw[gray, very thin, -] (3.3469,1.3265,0) -- (6.0204,-1.6735,6);
\draw[gray, very thin, -] (3.5102,1.2449,0) -- (6.2653,-1.7551,6);
\draw[gray, very thin, -] (3.6735,1.1633,0) -- (6.5102,-1.8367,6);
\draw[gray, very thin, -] (3.8367,1.0816,0) -- (6.7551,-1.9184,6);
\draw[gray, very thin, -] (4,1,0) -- (7,-2,6);

\draw[gray, very thin, -] (-4,5,0) -- (4,1,0);
\draw[gray, very thin, -] (-4.0345,4.8966,0.2069) -- (4.1034,0.8966,0.2069);
\draw[gray, very thin, -] (-4.069,4.7931,0.4138) -- (4.2069,0.7931,0.4138);
\draw[gray, very thin, -] (-4.1034,4.6897,0.6207) -- (4.3103,0.6897,0.6207);
\draw[gray, very thin, -] (-4.1379,4.5862,0.8276) -- (4.4138,0.5862,0.8276);
\draw[gray, very thin, -] (-4.1724,4.4828,1.0345) -- (4.5172,0.4828,1.0345);
\draw[gray, very thin, -] (-4.2069,4.3793,1.2414) -- (4.6207,0.3793,1.2414);
\draw[gray, very thin, -] (-4.2414,4.2759,1.4483) -- (4.7241,0.2759,1.4483);
\draw[gray, very thin, -] (-4.2759,4.1724,1.6552) -- (4.8276,0.1724,1.6552);
\draw[gray, very thin, -] (-4.3103,4.069,1.8621) -- (4.931,0.069,1.8621);
\draw[gray, very thin, -] (-4.3448,3.9655,2.069) -- (5.0345,-0.0345,2.069);
\draw[gray, very thin, -] (-4.3793,3.8621,2.2759) -- (5.1379,-0.1379,2.2759);
\draw[gray, very thin, -] (-4.4138,3.7586,2.4828) -- (5.2414,-0.2414,2.4828);
\draw[gray, very thin, -] (-4.4483,3.6552,2.6897) -- (5.3448,-0.3448,2.6897);
\draw[gray, very thin, -] (-4.4828,3.5517,2.8966) -- (5.4483,-0.4483,2.8966);
\draw[gray, very thin, -] (-4.5172,3.4483,3.1034) -- (5.5517,-0.5517,3.1034);
\draw[gray, very thin, -] (-4.5517,3.3448,3.3103) -- (5.6552,-0.6552,3.3103);
\draw[gray, very thin, -] (-4.5862,3.2414,3.5172) -- (5.7586,-0.7586,3.5172);
\draw[gray, very thin, -] (-4.6207,3.1379,3.7241) -- (5.8621,-0.8621,3.7241);
\draw[gray, very thin, -] (-4.6552,3.0345,3.931) -- (5.9655,-0.9655,3.931);
\draw[gray, very thin, -] (-4.6897,2.931,4.1379) -- (6.069,-1.069,4.1379);
\draw[gray, very thin, -] (-4.7241,2.8276,4.3448) -- (6.1724,-1.1724,4.3448);
\draw[gray, very thin, -] (-4.7586,2.7241,4.5517) -- (6.2759,-1.2759,4.5517);
\draw[gray, very thin, -] (-4.7931,2.6207,4.7586) -- (6.3793,-1.3793,4.7586);
\draw[gray, very thin, -] (-4.8276,2.5172,4.9655) -- (6.4828,-1.4828,4.9655);
\draw[gray, very thin, -] (-4.8621,2.4138,5.1724) -- (6.5862,-1.5862,5.1724);
\draw[gray, very thin, -] (-4.8966,2.3103,5.3793) -- (6.6897,-1.6897,5.3793);
\draw[gray, very thin, -] (-4.931,2.2069,5.5862) -- (6.7931,-1.7931,5.5862);
\draw[gray, very thin, -] (-4.9655,2.1034,5.7931) -- (6.8966,-1.8966,5.7931);
\draw[gray, very thin, -] (-5,2,6) -- (7,-2,6);

\draw[dashed,thick,-] (-3,0,0) -- (3,0,0);
\draw[thick,-stealth] (3,0,0) -- (5.3,0,0) node[anchor=north east]{$x_1$};
\draw[dashed,thick,-] (-3,0,0) -- (-3,3.6,0);
\draw[thick,-stealth] (-3,3.6,0) -- (-3,6.5,0) node[anchor=north west]{$x_2$};
\draw[dashed,thick,-] (-3,0,0) -- (-3,0,2.42);
\draw[thick,-stealth] (-3,0,2.42) -- (-3,0,8) node[anchor=north west]{$x_3$};

\draw[thick,-stealth] (6.17,-1.79,5.20) -- (5.08,-0.45,2.90) node[anchor=south]{$d_1$};
\draw[thick,-stealth] (6.17,-1.79,5.20) -- (1.55,-2.88,0) node[anchor=south]{$d_2$};

\draw [thick,red] (-4,4,0) .. controls (-4,-1,0) and (3,1,0) .. (3.5,-3,0);
\draw[thick,-stealth] (1,4,0) -- (-0.5,0.6,0);
\node[right] at (0.8,4.3,0) {$E(\mathbf{x}|y)$};
\end{tikzpicture}
\caption{Inverse Regression Curve in a Three-Dimensional Space}\label{f0.5}
\end{center}
\end{figure}

Similar unbiasedness property can be proved for our cluster-based SIR.

\begin{thm}\label{thm3}
Under certain linearity conditions, $E(\mathbf{x}|y)$ is contained in the space spanned by $\Sigma_{\mathbf{x}}\Lambda \Gamma$.
\end{thm}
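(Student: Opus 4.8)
The plan is to establish the result in two stages that mirror the two applications of SIR in the algorithm, exploiting the block structure of $\Sigma_{\mathbf{x}}$ implied by the clustering assumption. Because variables in different clusters are independent, the covariance matrix is block diagonal, $\Sigma_{\mathbf{x}} = \mathrm{diag}(\Sigma_1,\dots,\Sigma_c)$ with $\Sigma_i = \mathrm{Cov}(\mathbf{x}_i)$, and the padded vectors satisfy $\Sigma_{\mathbf{x}}\boldsymbol{\lambda}_l = (\mathbf{0}_1',\,(\Sigma_i\boldsymbol{\theta}_j^{(i)})',\,\mathbf{0}_2')'$ whenever $\boldsymbol{\lambda}_l$ carries $\boldsymbol{\theta}_j^{(i)}$ in block $i$. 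Keeping this block bookkeeping in view is what makes both stages transparent.

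First I would reduce the global index model to a within-cluster index model for each $i$. Writing $\boldsymbol{\beta}_k = (\boldsymbol{\beta}_k^{(1)\prime},\dots,\boldsymbol{\beta}_k^{(c)\prime})'$ so that $\boldsymbol{\beta}_k'\mathbf{x} = \sum_{i} \boldsymbol{\beta}_k^{(i)\prime}\mathbf{x}_i$, and using that $\mathbf{x}_i$ is independent of the remaining blocks, the quantities $u_k = \sum_{i'\neq i}\boldsymbol{\beta}_k^{(i')\prime}\mathbf{x}_{i'}$ are independent of $\mathbf{x}_i$; absorbing $(u_1,\dots,u_K,\boldsymbol{\epsilon})$ into a cluster-level error shows that $y$ depends on $\mathbf{x}_i$ only through $\boldsymbol{\beta}_1^{(i)\prime}\mathbf{x}_i,\dots,\boldsymbol{\beta}_K^{(i)\prime}\mathbf{x}_i$, i.e.\ a model of the form \eqref{1} holds within cluster $i$. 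Imposing Linearity Condition \ref{con0} at the cluster level and applying Theorem \ref{thm0} to $\mathbf{x}_i$ then gives $E(\mathbf{x}_i|y) \in \mathrm{span}(\Sigma_i\boldsymbol{\theta}_j^{(i)}: j=1,\dots,k_i)$, since the SIR directions $\boldsymbol{\theta}_j^{(i)}$ span the within-cluster e.d.r.\ space. Stacking these $c$ relations and using the block form of $\Sigma_{\mathbf{x}}\boldsymbol{\lambda}_l$ above yields the first-stage conclusion $E(\mathbf{x}|y) \in \mathrm{span}(\Sigma_{\mathbf{x}}\Lambda)$, so that $E(\mathbf{x}|y) = \Sigma_{\mathbf{x}}\Lambda\,\mathbf{w}(y)$ for some $\mathbf{w}(y)\in\mathbb{R}^m$.

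Next I would handle the second SIR, applied to the pooled variates $\mathbf{z} = \Lambda'\mathbf{x}$, whose covariance is $\Sigma_{\mathbf{z}} = \Lambda'\Sigma_{\mathbf{x}}\Lambda$. Imposing the linearity condition on $\mathbf{z}$ and applying Theorem \ref{thm0} once more gives $E(\mathbf{z}|y) \in \mathrm{span}(\Sigma_{\mathbf{z}}\Gamma)$, i.e.\ $E(\mathbf{z}|y) = \Sigma_{\mathbf{z}}\Gamma\,\mathbf{b}(y)$ for some $\mathbf{b}(y)\in\mathbb{R}^v$. On the other hand, the first stage gives $E(\mathbf{z}|y) = \Lambda'E(\mathbf{x}|y) = \Lambda'\Sigma_{\mathbf{x}}\Lambda\,\mathbf{w}(y) = \Sigma_{\mathbf{z}}\mathbf{w}(y)$. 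Equating the two expressions yields $\Sigma_{\mathbf{z}}\big(\mathbf{w}(y) - \Gamma\mathbf{b}(y)\big) = \mathbf{0}$; since the within-cluster directions are $\Sigma_i$-orthogonal and hence linearly independent, $\Lambda$ has full column rank, and with $\Sigma_{\mathbf{x}}$ positive definite $\Sigma_{\mathbf{z}} = \Lambda'\Sigma_{\mathbf{x}}\Lambda$ is positive definite and invertible. Thus $\mathbf{w}(y) = \Gamma\mathbf{b}(y)$ and $E(\mathbf{x}|y) = \Sigma_{\mathbf{x}}\Lambda\Gamma\,\mathbf{b}(y) \in \mathrm{span}(\Sigma_{\mathbf{x}}\Lambda\Gamma)$, as claimed.

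The routine parts are the stacking in the first stage and the cancellation in the second. I expect the main obstacle to be the justification of the within-cluster reduction: one must argue carefully that block independence turns the global model \eqref{1} into a bona fide index model for each $\mathbf{x}_i$ with directions $\boldsymbol{\beta}_k^{(i)}$, and then pin down which ``certain linearity conditions'' are actually needed, namely a cluster-level version of Condition \ref{con0} for every $i$ together with one more for the pooled variates $\Lambda'\mathbf{x}$. A secondary point requiring care is the invertibility of $\Sigma_{\mathbf{z}}$, which drives the cancellation step and rests on $\Lambda$ having full column rank.
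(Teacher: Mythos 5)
Your proof is correct and shares the paper's overall architecture: a first-stage representation $E(\mathbf{x}|y)=\Sigma_{\mathbf{x}}\Lambda\,\mathbf{w}(y)$, a second-stage representation $E(\Lambda'\mathbf{x}|y)=\Sigma_{\Lambda'\mathbf{x}}\Gamma\,\mathbf{b}(y)$, and a composition of the two. The differences lie in how the stages are justified. The paper's Appendix proof simply cites Li's explicit formula twice: under Linearity Condition \ref{con2} it writes $E(\mathbf{x}|y)=\Sigma_{\mathbf{x}}\Lambda(\Lambda'\Sigma_{\mathbf{x}}\Lambda)^{-1}E(\Lambda'\mathbf{x}|y)$, and under Condition \ref{con3} it substitutes $E(\Lambda'\mathbf{x}|y)=\Lambda'\Sigma_{\mathbf{x}}\Lambda\Gamma\kappa_2(y)$, so that $(\Lambda'\Sigma_{\mathbf{x}}\Lambda)^{-1}\Lambda'\Sigma_{\mathbf{x}}\Lambda$ cancels; your combination step (computing $E(\Lambda'\mathbf{x}|y)$ two ways and inverting $\Sigma_{\mathbf{z}}=\Lambda'\Sigma_{\mathbf{x}}\Lambda$) is algebraically the same cancellation, and your explicit check that $\Lambda$ has full column rank, hence that $\Sigma_{\mathbf{z}}$ is invertible, covers a point the paper uses silently. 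Where you genuinely deviate is the first stage: rather than postulating the global Condition \ref{con2} and invoking Li's formula (the route behind the paper's Corollary \ref{cor1}, which is stated without proof), you derive $E(\mathbf{x}|y)\in\mathrm{span}(\Sigma_{\mathbf{x}}\Lambda)$ from between-cluster independence, by reducing model \eqref{1} to a bona fide within-cluster index model, applying Theorem \ref{thm0} cluster by cluster, and stacking through the block-diagonal $\Sigma_{\mathbf{x}}$. This is more self-contained and makes the role of the clustering assumption explicit, something the paper never does; the price is your claim that ``the SIR directions $\boldsymbol{\theta}_j^{(i)}$ span the within-cluster e.d.r.\ space,'' which is an exhaustiveness property that SIR theory does not guarantee (SIR is only unbiased and can miss directions, e.g.\ under symmetric dependence). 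The step is nevertheless salvageable, because at the population level the SIR eigenvectors with nonzero eigenvalues automatically span the centered inverse curve $E(\mathbf{x}_i|y)$ of each cluster, which is all your stacking argument actually needs; and in any case the paper's own Theorem \ref{thm1} makes exactly the same identification of SIR directions with e.d.r.\ directions, so your argument sits at the same level of rigor as the paper's while exposing more of the structure.
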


Theorem \ref{thm3} describes the desirable property that there is no estimation bias. The e.d.r.-space estimated by our CRSIR method contains the true inverse regression curve. The details of the proof are provided in the Appendix.

\subsection{Orthogonalization}

For a given dataset $\mathbf{X}$ with dimension $N \times T$, and clusters $\mathbf{X_1},\dots,\mathbf{X_c}$, the correlations between these clusters need to be reduced to zero, to achieve cluster-wise independence. QR decomposition along with projection operators is used to perform the orthogonalization.

To begin with, use QR decomposition to find the orthogonal bases of the first cluster $\mathbf{X_1}$, named as $\mathbf{Q_1}$. Next, project the second cluster $\mathbf{X_2}$ onto the space of $\mathrm{span}\{\mathbf{Q_1}\}^{\bot}$, which is the orthogonal complement of the space spanned by $\mathbf{X_1}$, named as $\mathbf{X_2}^*$,
\begin{equation}
\mathbf{X_2}^*=(\mathbf{I}-\mathbf{Q_1}\mathbf{Q_1}')\mathbf{X_2}.
\end{equation}

Then use QR decomposition again to find the orthogonal bases of $\mathbf{X_2}^*$, named as $\mathbf{Q_2}$, and project $\mathbf{X_3}$ onto the space of $\mathrm{span}\{\mathbf{Q_1}, \mathbf{Q_2}\}^{\bot}$, named as $\mathbf{X_3}^*$. Keep doing such process till the last cluster $\mathbf{X_c}$, we will get a new sequence of clusters $\mathbf{X_1}, \mathbf{X_2}^*, \dots,\mathbf{X_c}^*$, in which every two clusters are orthogonal, and the new sequence contains all the information of the original dataset $\mathbf{X}$.

\subsection{Regularization}

Due to the high correlations between the series within each cluster, the covariance matrices of each cluster $\Sigma_{\mathbf{x_i}}$ are ill-conditioned, which make them hard to be inversed. We suggest a regularized version of the covariance matrix to overcome this issue (\citet{friedman1989regularized}).
\begin{equation}
\Sigma_{\mathbf{x_i}}(\tau)=(1-\tau)\Sigma_{\mathbf{x_i}}+\tau\frac{\mathrm{tr}{\Sigma_{\mathbf{x_i}}}}{N_i}I_{N_i},
\end{equation}
where $\tau  \in [0,1]$ is the shrinkage parameter. This is similar to the ridge version proposed by \citet{zhong2005rsir}, which replaces $\Sigma_{\mathbf{x_i}}$ with $\Sigma_{\mathbf{x_i}}+\tau I_{N_i}$.

The shrinkage parameter $\tau$ can be chosen by cross-validation. Note when $\tau =1$, the regularized covariance matrix will degenerate to a diagonal matrix whose diagonal elements are the means of the eigenvalues of $\Sigma_{\mathbf{x_i}}$. In such case, the chosen e.d.r.-direction is one of the input series, and the other series, which may also contain information for the predictors, are discarded.

\subsection{Comparison between CRSIR and SIR}

Before applying the proposed CRSIR method to real data, consider the following simulated example first, for comparing the performance of CRSIR and SIR methods.

We choose $\gamma$ clusters of predictors with cluster size 10, say, $\mathbf{x_1},\ \mathbf{x_2},\cdots,\mathbf{x_{\gamma}}$, which are independent and identically distributed (i.i.d.) with multivariate normal distribution $N(0,\Sigma)$, where $\Sigma$ is a $10 \times 10$ covariance matrix with $1$ at diagonal and $0.9$ at off-diagonal.

The response $y$ is simulated using the following formula,
\begin{equation*}
y=\sum_{j=1}^{\gamma}j\times\mathbf{x_j}+\mathbf{e}
\end{equation*}

Where the random error $\mathbf{e}$ is independent to $\mathbf{x_i}$'s, and follows normal distribution $N(0,0.1)$.

For simplicity, as well as keeping consistent with our following example, root mean square error (RMSE) is considered as a criterion to evaluate both in-sample prediction and out-of-sample forecasting.
\begin{equation}
\mathrm{RMSE}=\sqrt{\sum_{i=1}^T\big(\hat{y}_i-y_i\big)^2\Big/T},
\end{equation}
where $\hat{y}_i$ is the $i$th predicted value of the response, $y_i$ is the $i$th observed value, and $T$ is the number of observations.

We simulate 600 observations, in which 300 of them are used as training data and the others are used as testing data, at each run under above conditions. In CRSIR, the parameters $c$ and $\tau$ are chosen to minimize the in-sample RMSE for each run. Table \ref{t0} presents the means and standard deviations (in the parentheses) for the RMSE of SIR and CRSIR across 100 runs for several cluster numbers, and the median of the corresponding optimal $c$ and $\tau$.

\begin{table}[b!]
\begin{adjustwidth}{-1in}{-1in}
\begin{center}
\caption{Simulation Results for CRSIR and SIR}
\begin{footnotesize}
\begin{tabular}{l|rr|rrrr}
\toprule
 &  \multicolumn{2}{c|}{{\bf SIR}} & \multicolumn{4}{c}{{\bf CRSIR}}\\
 &  In-sample & Out-of-sample & In-sample & Out-of-sample & \emph{median}($c$) & \emph{median}($\tau$)\\
 \hline
 $\gamma=1$ & 1.64(0.40) & 1.65(0.41) & 1.63(0.29) & 1.65(0.35) & 2 & 0.52 \\
 $\gamma=2$ & 3.25(0.63) & 3.41(0.69) & 3.09(0.42) & 3.22(0.46) & 2 & 0.56\\
 $\gamma=5$ & 8.24(1.06) & 9.07(1.34) & 6.34(0.46) & 6.56(0.45) & 5 & 0.71\\
 $\gamma=10$ & 16.62(2.01) & 20.41(2.61) & 10.08(0.76) & 10.55(0.94) & 10 & 0.74\\
\bottomrule
\end{tabular}
\label{t0}
\end{footnotesize}
\end{center}
\end{adjustwidth}
\end{table}

From Table \ref{t0}, it is clear that CRSIR has better results than SIR for both in-sample prediction and out-of-sample forecasting. The CRSIR has similar RMSEs with smaller standard deviations when the number of clusters degenerates to 1, but it appears to be superior when the number of clusters increases. Besides, CRSIR outstands itself in out-of-sample forecasting, RMSEs for the testing data are almost the same as the one for training data, while SIR has much larger out-of-sample RMSEs. In addition, our other simulations, which are not presented here, show that CRSIR performs even better when the sample size $T$ decreases to $N$.

\section{Empirical application}

\subsection{Dataset and method}

The dataset we use is \citet{stock2011generalized} dataset, which contains 143 quarterly macroeconomic variables from 13 economic categories, such as gross domestic product (GDP), industrial production (IP), employment, price indexes, interest rates, etc. We use 109 of them as explanatory variables, since the other 34 are just high-level aggregates of the 109. All 143 variables are used for forecasting purposes.

Following Stock and Watson's data transformation methodology, first differences of logarithms, first differences, and second differences of logarithms are used for real quantity variables, nominal interest rate variables, and price series, respectively. The correlation plot of the 109 predictor series after logarithm and/or differences is showed in Figure \ref{f1}, which demonstrates that there do exist some highly correlated blocks. Therefore, our cluster-based method is necessary for this dataset.

\begin{figure}[p!]
  \begin{center}
  \includegraphics[width=0.9\textwidth]{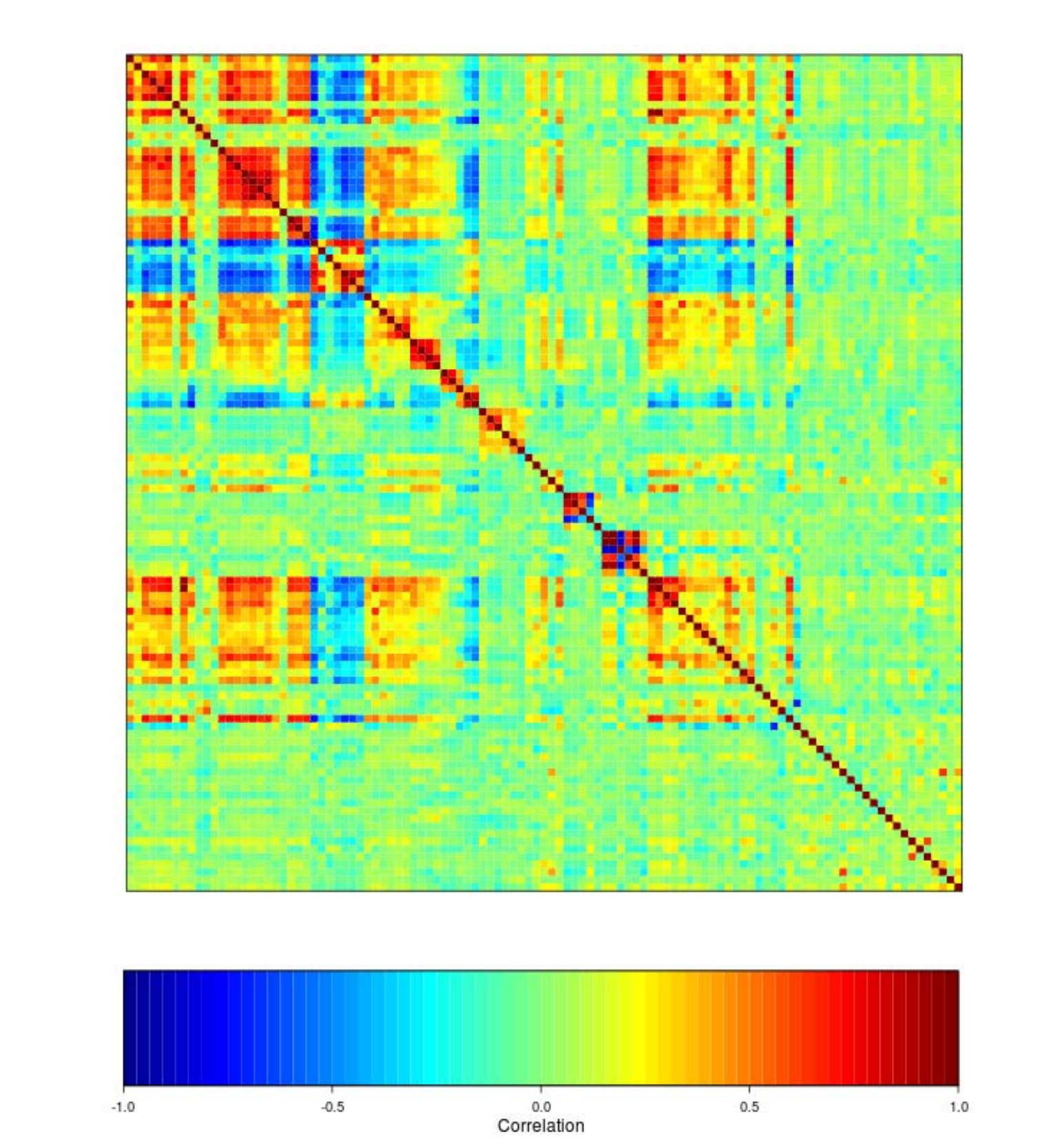}
  \caption{Plot of Correlations of the 109 Predictor Series}
\label{f1}
\end{center}
\end{figure}

For the purpose of comparison, similar rolling pseudo out-of-sample forecasting simulation as in \citet{stock2011generalized} is used, as well as the cross validation for choosing $c$ and $\tau$. In general, starting from 1985 to 2008, at each date $t$, using the data prior to $t$ to predict the forecasted variable $y$ at $h$ date ahead, which is denoted as $\hat{y}_{t+h}$. The main steps can be described as follows,
\begin{enumerate}
 \item Use the formula given by \citet{stock2011generalized}  Table B.2 to transform all the series and screen for outliers.
 \item At each date $t$, use cross-validation, which is described below, to the most recent 100 observations to choose the parameter $c$ and $\tau$ in CRSIR based on mean square error.
 \item Use the chosen $\hat{c}$ and $\hat{\tau}$ with the data prior to $t$ to predict $\hat{y}_{t+h}$ by CRSIR.
 \item Calculate the RMSE for the forecasting procedure,
$$\textrm{RMSE}=\sqrt{\sum_{t=1}^{T}\big(y_{t+h}-\hat{y}_{t+h}\big)^2/T}.$$
\end{enumerate}
The steps for cross-validation are described as follows,
\begin{enumerate}[(i)]
 \item Regress $y_{t+h}$ and $x_t$ on the autoregressive terms $1$, $y_{t}$, $y_{t-1}$, $y_{t-2}$, $y_{t-3}$, in order to eliminate the autoregressive effect. Denote the residuals as $\tilde{y}_{t+h}$ and $\tilde{x}_{t}$.
 \item Let $\Im(t)=\{1,\cdots,t-2h-3,t+2h+3,\cdots,100\}$, at each date $t=1,\cdots,100-h$, find the e.d.r-directions and linear regression model using CRSIR and observations $\tilde{y}_{i}$ and $\tilde{x}_{i},\ i \in \Im(t)$.
 \item Use the e.d.r-directions and linear regression model from the above step at date $t$ to predict $\tilde{y}_{t+h}$.
 \item For fixed $h$, parameters $(c,\tau)$ are chosen by minimizing the sum of squared forecasting error,
$$(\hat{c},\hat{\tau})=\textrm{argmin} \frac{1}{100-h}\sum_{t=1}^{100-h}\Big(\tilde{y}_{t+h}-\hat{\tilde{y}}_{t+h}\Big)^2.$$
\end{enumerate}

\subsection{Results}

We compare our method with the dynamic factor model using the first five principle components (DFM-5), which was claimed to be no worse than any other shrinkage methods according to Stock and Watson (2011). Besides, autoregressive model of order 4 (AR(4)) is used as a benchmark, and all RMSEs are recorded as the ratio relative to AR(4), smaller relative RMSE indicates better forecasting performance.

Table \ref{t1} presents the number of series with smaller RMSEs than AR(4) model for CRSIR and DFM-5. We can see that for forecasting period $h=1$, if CRSIR is used, there are 97 series out of 143 have smaller RMSEs than the benchmark AR(4) model. If DFM-5 is used, only 85 series out of 143 have smaller RMSEs than AR(4) model. The differences become even larger for big forecasting period, when $h=4$ the number of series of CRSIR increases to 115 while the number of DFM-5 decreases to 53.

\begin{table}[!htbp]
\begin{center}
\caption{Number of Series with Smaller RMSE than AR(4) Model}
\begin{tabular}{c|cc}
\toprule
 & {\bf DFM-5} & {\bf CRSIR} \\ \hline
$h=1$ & 85 & 97 \\
$h=2$ & 59 & 109 \\
$h=4$ & 53 & 115 \\
\bottomrule
\end{tabular}
\label{t1}
\end{center}
\end{table}

Table \ref{t2} presents the distributions of the RMSEs for AR(4), DFM-5, and CRSIR methods. When $h=1$, the first quartile of the relative RMSE of CRSIR is just 0.768, which is much smaller than the relative RMSE of DFM-5 (0.961), and the median relative RMSE of CRSIR is 0.907, while DFM-5 has 0.993. When $h=2$ and $h=4$, CRSIR improves the forecasting results of AR(4) for more than $3/4$ of the series. The relative RMSEs of CRSIR at first, second, and third quartile are all smaller than those of DFM-5.

From Table \ref{t1} and \ref{t2}, one can tell that CRSIR improves the forecasting results significantly compared to the DFM-5 method, especially for longer forecasting period.

\begin{table}[!htbp]
\begin{center}
\caption{Distributions of Relative RMSEs by Pseudo Out-of-Sample Forecasting}
\label{t2}
\subfloat[{\normalsize $h=1$}\label{t2a}]{
\begin{tabular}{l|rrrrr}
\toprule
{\bf Method} & \multicolumn{5}{c}{{\bf Percentiles}} \\
 & 0.050 & 0.250 & 0.500 & 0.750 & 0.950 \\ \hline
AR(4) & 1.000 & 1.000 & 1.000 & 1.000 & 1.000 \\
DFM-5 & 0.874 & 0.961 & 0.993 & 1.022 & 1.089 \\
CRSIR & 0.621 & 0.768 & 0.907 & 1.048 & 1.372 \\
\bottomrule
\end{tabular}
}

\subfloat[{\normalsize $h=2$}\label{t2b}]{
\begin{tabular}{l|rrrrr}
\toprule
{\bf Method} & \multicolumn{5}{c}{{\bf Percentiles}} \\
 & 0.050 & 0.250 & 0.500 & 0.750 & 0.950 \\ \hline
AR(4) & 1.000 & 1.000 & 1.000 & 1.000 & 1.000 \\
DFM-5 & 0.882 & 0.976 & 1.010 & 1.044 & 1.125 \\
CRSIR & 0.652 & 0.759 & 0.865 & 0.991 & 1.186 \\
\bottomrule
\end{tabular}
}

\subfloat[{\normalsize $h=4$}\label{t2c}]{
\begin{tabular}{l|rrrrr}
\toprule
{\bf Method} & \multicolumn{5}{c}{{\bf Percentiles}} \\
 & 0.050 & 0.250 & 0.500 & 0.750 & 0.950 \\ \hline
AR(4) & 1.000 & 1.000 & 1.000 & 1.000 & 1.000 \\
DFM-5 & 0.903 & 0.980 & 1.020 & 1.058 & 1.138 \\
CRSIR & 0.648 & 0.736 & 0.827 & 0.940 & 1.220 \\
\bottomrule
\end{tabular}
}
\end{center}
\end{table}

\begin{table}[htbp]
\begin{adjustwidth}{-1in}{-1in}
\begin{center}
\caption{Median Relative RMSE for Forecasting by Category of Series}
\begin{footnotesize}
\begin{tabular}{l|rrr|rrr|rrr}
\toprule
 & \multicolumn{ 3}{c|}{\textbf{$h=1$}} & \multicolumn{ 3}{c|}{\textbf{$h=2$}} & \multicolumn{ 3}{c}{\textbf{$h=4$}} \\
\multicolumn{1}{c|}{\textbf{Category}} & \multicolumn{1}{c}{\textbf{DFM-5}} & \multicolumn{1}{c}{\textbf{S\&W}} & \multicolumn{1}{c|}{\textbf{CRSIR}} & \multicolumn{1}{c}{\textbf{DFM-5}} & \multicolumn{1}{c}{\textbf{S\&W}} & \multicolumn{1}{c|}{\textbf{CRSIR}} & \multicolumn{1}{c}{\textbf{DFM-5}} & \multicolumn{1}{c}{\textbf{S\&W}} & \multicolumn{1}{c}{\textbf{CRSIR}} \\
\hline
1. GDP Components & 0.905 & 0.905 & 1.079 & 0.907 & 0.870 & 0.807 & 0.906 & 0.906 & 0.839 \\
2. Industrial Production & 0.882 & 0.882 & 0.669 & 0.861 & 0.852 & 0.694 & 0.827 & 0.827 & 0.745 \\
3. Employment & 0.861 & 0.861 & 0.849 & 0.861 & 0.859 & 0.803 & 0.844 & 0.842 & 0.823 \\
4. Unempl. Rate & 0.800 & 0.799 & 0.771 & 0.750 & 0.723 & 0.723 & 0.762 & 0.743 & 0.647 \\
5. Housing  & 0.936 & 0.897 & 1.220 & 0.940 & 0.902 & 1.081 & 0.926 & 0.882 & 0.807 \\
6. Inventories & 0.900 & 0.886 & 0.856 & 0.867 & 0.867 & 0.764 & 0.856 & 0.856 & 0.784 \\
7. Prices & 0.980 & 0.970 & 0.865 & 0.977 & 0.961 & 0.892 & 0.963 & 0.948 & 0.797 \\
8. Wages & 0.993 & 0.938 & 0.967 & 0.999 & 0.919 & 0.960 & 1.019 & 0.931 & 1.031 \\
9. Interest Rates & 0.980 & 0.946 & 0.849 & 0.952 & 0.928 & 0.892 & 0.956 & 0.949 & 0.822 \\
10. Money & 0.953 & 0.926 & 1.000 & 0.933 & 0.921 & 0.950 & 0.909 & 0.909 & 0.927 \\
11. Exchange Rates & 1.015 & 0.981 & 0.974 & 1.015 & 0.980 & 1.108 & 1.036 & 0.965 & 1.150 \\
12. Stock Prices & 0.983 & 0.983 & 0.840 & 0.977 & 0.955 & 0.893 & 0.974 & 0.961 & 1.039 \\
13. Cons. Exp. & 0.977 & 0.977 & 0.765 & 0.963 & 0.960 & 1.082 & 0.966 & 0.955 & 0.963 \\
\bottomrule
\end{tabular}
\end{footnotesize}
\label{t3}
\end{center}
\end{adjustwidth}
\end{table}

\begin{figure}[!tbp]
  \begin{center}
\subfloat[From CRSIR Favored Categories\label{f2a}]{
  \includegraphics[width=0.5\textwidth]{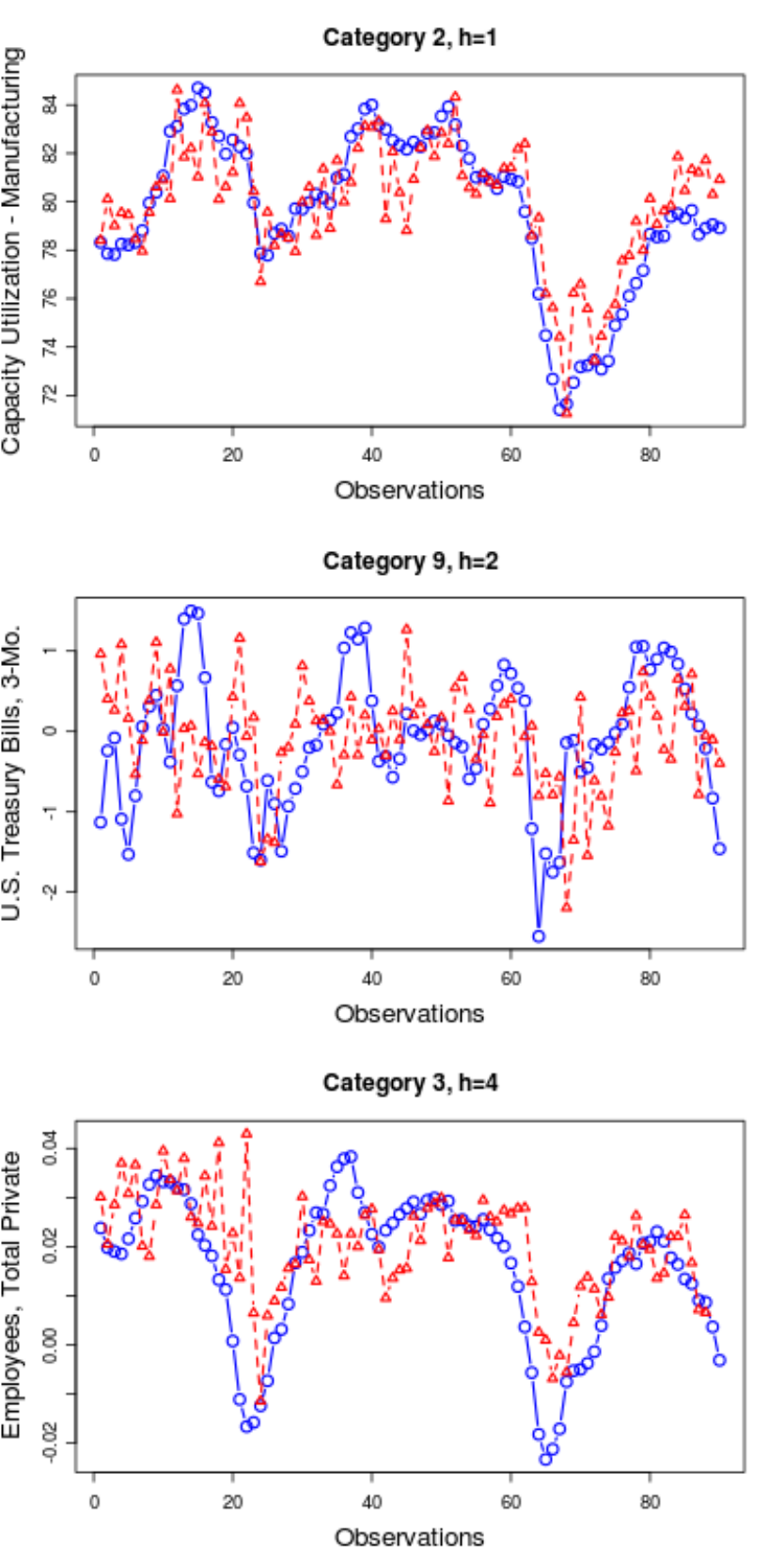}
}
\subfloat[From CRSIR No-Favored Categories\label{f2b}]{
  \includegraphics[width=0.5\textwidth]{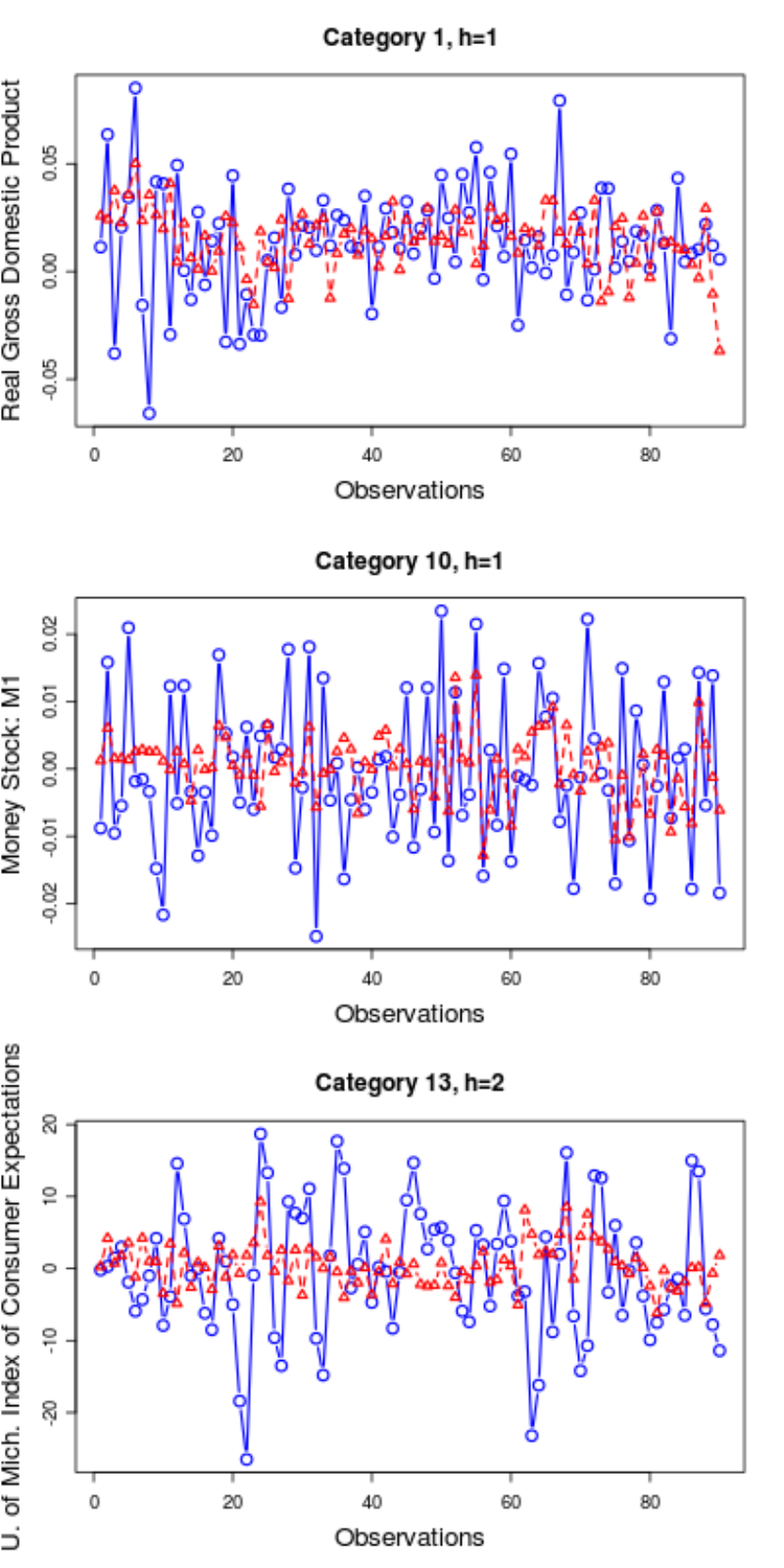}
}
\caption{Plots of the Forecasting Values ({\color{red} $\triangle$}) vs. Real Observations ({\color{blue} $\circ$}) from 1985 to 2008.}
\label{f2}
\end{center}
\end{figure}

Table \ref{t3} presents the median RMSEs relative to AR(4) model by category via cross-validation. Column ``S\&W'' reports the smallest relative RMSE Stock and Watson got using DFM-5 and other shrinkage methods in their 2011 paper. Comparing all these results, CRSIR method has smaller median relative RMSEs for more than 70\% of these categories among three forecasting period, which demonstrates its superiority again.

Table \ref{t3} also indicates the performance of CRSIR varied across categories. It has outstanding performance for some categories, such as \texttt{Industrial Pro\-duction}, \texttt{Unemployment Rate}, \texttt{Inventories}, \texttt{Interest Rates}, etc. But it does not work well for some others, such as \texttt{Housing}, \texttt{Money}, \texttt{Exchange Rates}. Figure \ref{f2} plots six series from both CRSIR favored and no-favored categories. Three of them in Figure \ref{f2a} are from CRSIR favored categories and three of them in Figure \ref{f2b} are from CRSIR no-favored categories. From these plots, one can see that the responses of the CRSIR no-favored series are quite disordered. They are more like white noises, the variations are big but the changes of ${\bf x}$ means are not distinct. The inverse regression method is aimed to detect the variation of $E(\mathbf{x}|y)$. If the conditional expectations of $\mathbf{x}$ do not have much difference for different values of $y$, the estimation for the e.d.r.-directions will be inaccurate, and will lead to the poor performance on forecasting.

Six series are reported as illustrations to show how the estimated RMSE changes when $\tau$ or $c$ changes. They are \texttt{real average hourly earnings} (PI), \texttt{industrial production index} (IP), \texttt{unemployment rate} (UR), \texttt{employees} (EMP), \texttt{3 months treasury bills} (TBI\allowbreak LL), and \texttt{10 years treasury const maturities} (TBOND). Figure \ref{f2} shows the plot of their RMSEs with the values of shrinkage parameter $\tau$ for $h=2$ and $c=10$, Figure \ref{f3} shows the plot of their RMSEs with the number of clusters $c$ for $h=2$ and $\tau=0.5$. RMSEs in both figures are standardizes to the same scale for comparing purposes. These two figures confirm that the clustering and regularized approach do enhance the performance for the regular SIR method, and for this dataset, optimal $\tau$ is between 0.4 to 0.8, and optimal $c$ is between 8 to 12.

\begin{figure}[htpb]
   \begin{center}
   \includegraphics[width=0.85\textwidth]{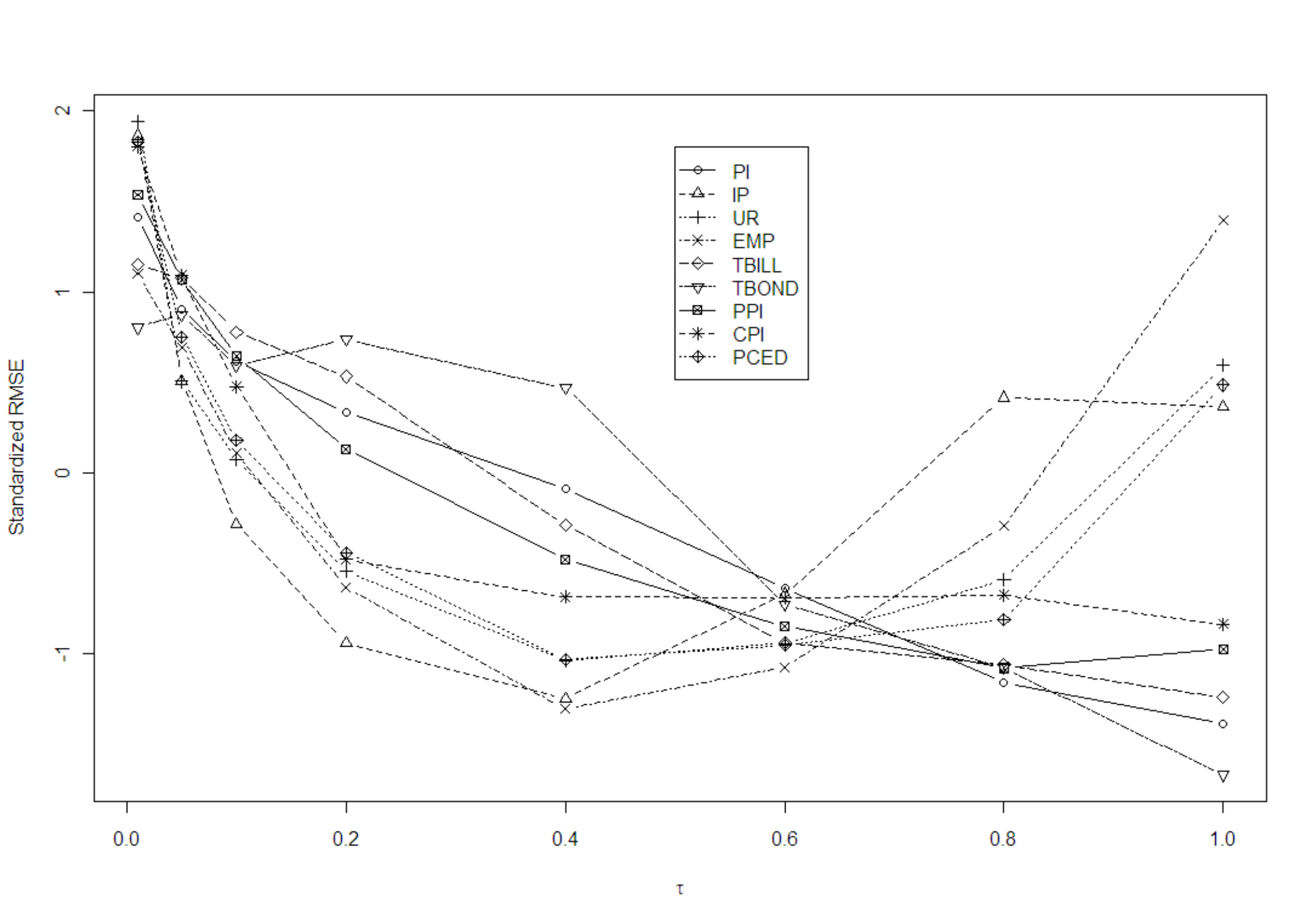}
   \caption{RMSE vs. $\tau$}
   \label{f2}
   \includegraphics[width=0.85\textwidth]{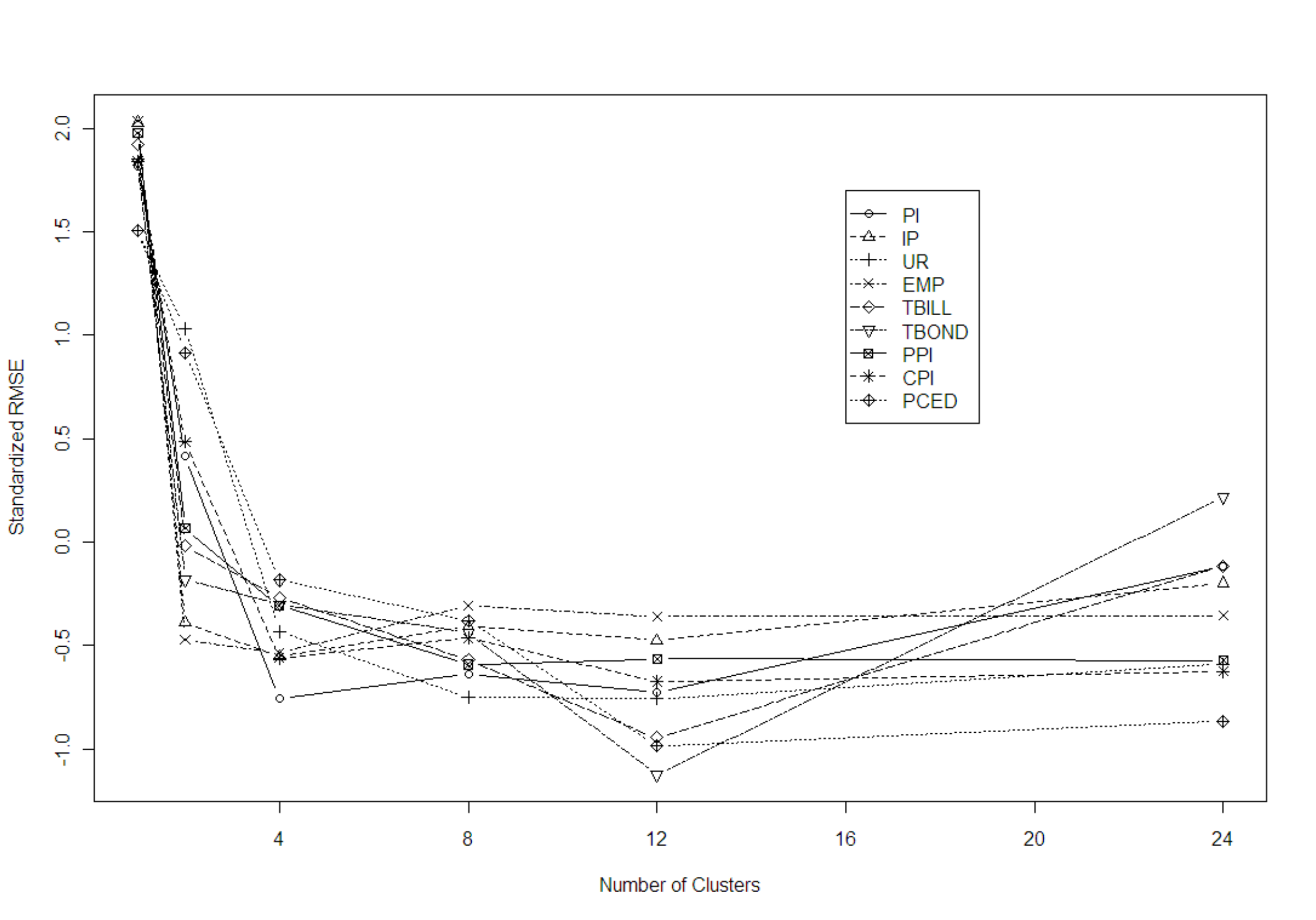}
   \caption{RMSE vs. Number of Clusters}
   \label{f3}
 \end{center}
 \end{figure}

Figure \ref{f4} presents the pair plots of the forecasting series $y$ with the first three e.d.r.-directions estimated by CRSIR for one of the series. The other series in all horizons had similar results. It shows that the relation between the forecasting series and e.d.r.-directions is close to linear, along with the independence among e.d.r.-directions, it is reasonable to use linear model with OLS to predict the values of $y$.

\begin{figure}[tb!]
   \begin{center}
   \includegraphics[width=0.7\textwidth]{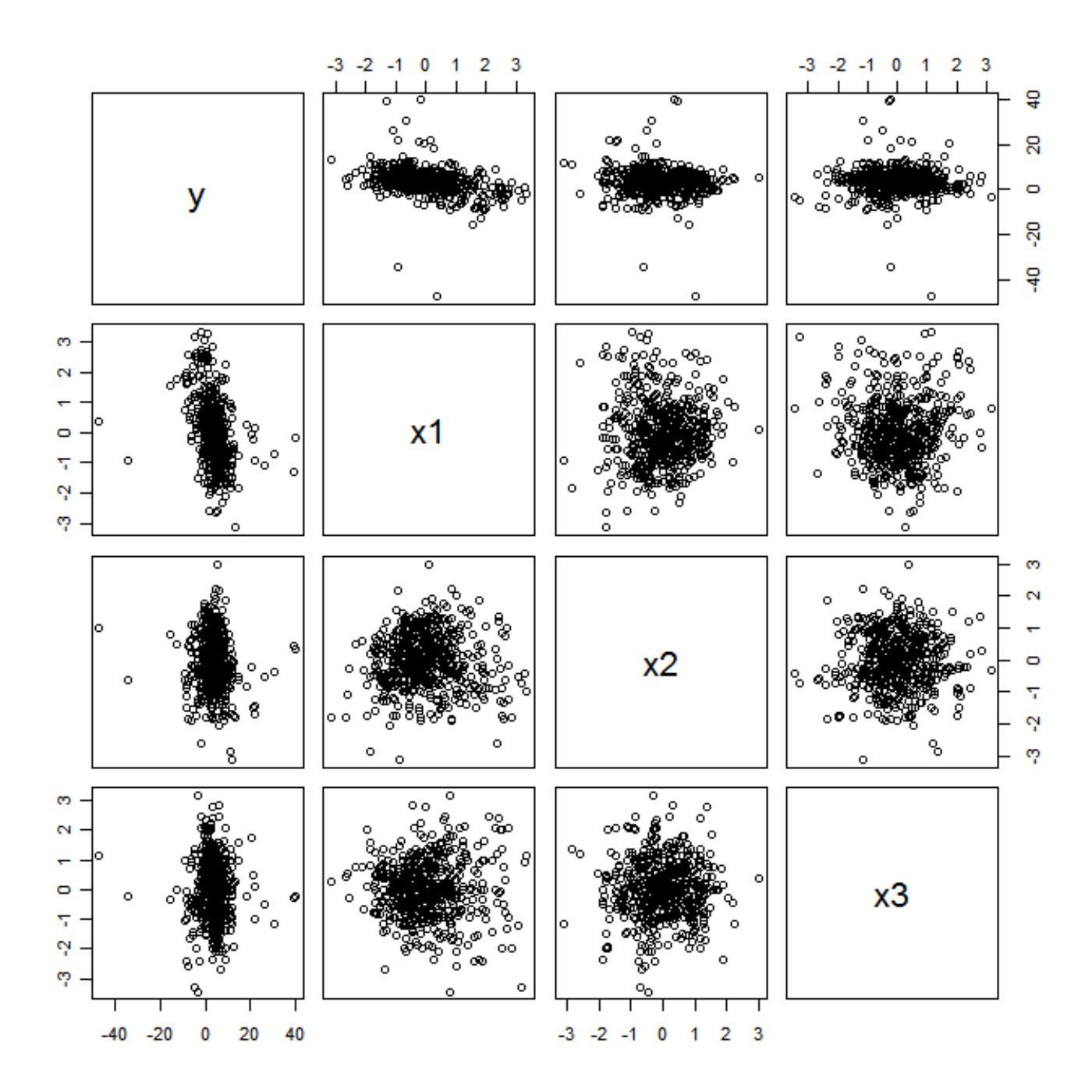}
   \caption{$y$ vs. the First Three e.d.r.-Directions $x1$, $x2$ and $x3$}
 \label{f4}
 \end{center}
\end{figure}

\section{Conclusion and discussion}

\noindent Sliced inverse regression now becomes a popular dimension
reduction method in computer science, engineering and biology. In
this article, we bring it to macroeconomic forecasting model when
there are a large number of predictors and high collinearity.
Compared to the classical dynamic factor model, SIR retrieves
information not only from the predictors but also from the response.
Moreover, our cluster-based regularized SIR has the ability to
handle highly collinearity or ``$T<N$'' data. The simulation
confirms that it offers a lot of improvements over DFM-5 model on
the macroeconomic data set.

After finding the CRSIR variates, we use linear models for forecasting the responses $y$, because scatter plots for CRSIR variates and $y$ values show strong linear relationships, and the results are desirable. But one may use polynomials, splines, Lasso, or some other more advanced regression techniques for different cases to get better fitting results.

Based on its basic idea, there are more than one generalizations of SIR using higher order inverse moments. For instance, SAVE (\citet{cook1991discussion}), SIR-II (\citet{li1991rejoinder}), DR (\citet{li2007directional}), and SIMR (\citet{ye2010sliced}). Our cluster-based algorithm can also be applied to these methods for highly collinearity data, and good performance is expected.

Above all, we can conclude that the cluster-based regularized sliced inverse regression is a powerful tool in forecasting using many predictors. It may not be limited in macroeconomic forecasting, and can also be applied to dimension reduction or variable selection problems in social science, microarray analysis, or clinical trails when the dataset is large and highly correlated.

\bibliographystyle{unsrtnat}
\bibliography{myref}

\section*{Appendix}
Assume the following linearity conditions.
\begin{lcon}\label{con1}
For any $\mathbf{b} \in\mathbb{R}^{N_i}$, the conditional expectation $E(\mathbf{b}'\mathbf{x_i}|\boldsymbol{\theta}_j^{(i)\prime}\mathbf{x_i})$ is linear in $\boldsymbol{\theta}_j^{(i)\prime}\mathbf{x_i}$, $j=1,2,\dots,k_i$.
\end{lcon}
\begin{lcon}\label{con2}
For any $\mathbf{b} \in\mathbb{R}^{N}$, the conditional expectation $E(\mathbf{b}'\mathbf{x}|\Lambda'\mathbf{x})$ is linear in $\boldsymbol{\lambda}_1'\mathbf{x},\boldsymbol{\lambda}_2'\mathbf{x},\dots, \boldsymbol{\lambda}_m'\mathbf{x}$.
\end{lcon}
\begin{lcon}\label{con3}
For any $\mathbf{b} \in\mathbb{R}^{m}$, the conditional expectation $E(\mathbf{b}'\Lambda'\mathbf{x}|\Gamma'\Lambda'\mathbf{x})$ is linear in $\boldsymbol{\gamma}_1'\Lambda'\mathbf{x},\boldsymbol{\gamma}_2'\Lambda'\mathbf{x},\dots,\boldsymbol{\gamma}_v'\Lambda'\mathbf{x}$.
\end{lcon}
Condition \ref{con1} and \ref{con2} are satisfied when all the $\mathbf{x}$'s have elliptical symmetric distribution, especially the multivariate normal distribution (\citet{eaton1986characterization}). Condition \ref{con3} is also satisfied when all the $\Lambda'\mathbf{x}$ have elliptical symmetric distribution, which is true because all the elliptical symmetric distributed $\mathbf{x}$'s have been standardized to the same scale.

Li's Theorem \ref{thm0} can be restated as following for each cluster when $E(\mathbf{x})=0$.
\begin{thm}(\citet{li1991sliced})\label{thm1}
Under Linearity Condition \ref{con1}, $E(\mathbf{x_i}|y)$ is contained in the space spanned by $\Sigma_{\mathbf{x_i}}\boldsymbol{\theta}_j^{(i)},\ j=1,\dots,k_i$
\end{thm}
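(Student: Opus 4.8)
The plan is to replay Li's original argument for a reduced within-cluster model, so that the whole proof amounts to checking that the cluster structure supplies exactly the hypotheses Li's proof needs. First I would argue that the global model collapses to a within-cluster single-index model. Writing each e.d.r.\ direction blockwise as $\boldsymbol{\beta}_k = (\dots, \boldsymbol{\beta}_{k,i}', \dots)'$, the response in \eqref{1} reads $y = g(\sum_i \boldsymbol{\beta}_{1,i}'\mathbf{x_i}, \dots, \boldsymbol{\epsilon})$; collecting everything except $\mathbf{x_i}$ --- namely the other blocks $\{\mathbf{x_{i'}}\}_{i'\neq i}$ together with $\boldsymbol{\epsilon}$ --- into a surrogate error $\tilde{\boldsymbol{\epsilon}}_i$, and using that blocks in distinct clusters are independent (and $\boldsymbol{\epsilon}\perp\mathbf{x}$), one gets $\tilde{\boldsymbol{\epsilon}}_i \perp \mathbf{x_i}$ and that $y$ depends on $\mathbf{x_i}$ only through $\boldsymbol{\theta}_1^{(i)\prime}\mathbf{x_i}, \dots, \boldsymbol{\theta}_{k_i}^{(i)\prime}\mathbf{x_i}$. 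This is precisely the SIR model structure at the cluster level, so $y$ and $\mathbf{x_i}$ are conditionally independent given the index vector $\Theta_i'\mathbf{x_i}$, where I write $\Theta_i = (\boldsymbol{\theta}_1^{(i)}, \dots, \boldsymbol{\theta}_{k_i}^{(i)})$.

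Next comes the duality argument. Fix any $\mathbf{b} \in \mathbb{R}^{N_i}$ that is $\Sigma_{\mathbf{x_i}}$-orthogonal to every direction, i.e.\ $\mathbf{b}'\Sigma_{\mathbf{x_i}}\boldsymbol{\theta}_j^{(i)} = 0$ for all $j$; the goal is to show $\mathbf{b}'E(\mathbf{x_i}|y) = 0$. By Linearity Condition \ref{con1} there are scalars $c_0, c_1, \dots, c_{k_i}$ with $E(\mathbf{b}'\mathbf{x_i}|\Theta_i'\mathbf{x_i}) = c_0 + \sum_j c_j \boldsymbol{\theta}_j^{(i)\prime}\mathbf{x_i}$, and since $E(\mathbf{x_i}) = 0$ taking expectations forces $c_0 = 0$. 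Multiplying this identity by $\boldsymbol{\theta}_k^{(i)\prime}\mathbf{x_i}$ and taking expectations, the tower property (the multiplier is a function of the conditioning variable) gives $\mathbf{b}'\Sigma_{\mathbf{x_i}}\boldsymbol{\theta}_k^{(i)} = \sum_j c_j \boldsymbol{\theta}_j^{(i)\prime}\Sigma_{\mathbf{x_i}}\boldsymbol{\theta}_k^{(i)}$ for each $k$. The left side vanishes by the choice of $\mathbf{b}$, so the vector $\mathbf{c} = (c_1, \dots, c_{k_i})'$ satisfies $(\Theta_i'\Sigma_{\mathbf{x_i}}\Theta_i)\mathbf{c} = 0$; as $\Sigma_{\mathbf{x_i}}$ is positive definite and the $\boldsymbol{\theta}_j^{(i)}$ are linearly independent, the Gram matrix $\Theta_i'\Sigma_{\mathbf{x_i}}\Theta_i$ is invertible and $\mathbf{c} = 0$, whence $E(\mathbf{b}'\mathbf{x_i}|\Theta_i'\mathbf{x_i}) = 0$.

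Finally I would transfer this from conditioning on the index to conditioning on $y$. Using the conditional independence from the first step, $E(\mathbf{b}'\mathbf{x_i}|\Theta_i'\mathbf{x_i}, y) = E(\mathbf{b}'\mathbf{x_i}|\Theta_i'\mathbf{x_i})$, so iterated expectation yields $\mathbf{b}'E(\mathbf{x_i}|y) = E\!\left[E(\mathbf{b}'\mathbf{x_i}|\Theta_i'\mathbf{x_i})\,\middle|\,y\right] = 0$. Since $\mathbf{b}$ was an arbitrary element of the orthogonal complement of $\mathrm{span}(\Sigma_{\mathbf{x_i}}\boldsymbol{\theta}_1^{(i)}, \dots, \Sigma_{\mathbf{x_i}}\boldsymbol{\theta}_{k_i}^{(i)})$, the vector $E(\mathbf{x_i}|y)$ is orthogonal to that complement and therefore lies in the span itself, which is the assertion.

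The step I expect to be the main obstacle is the first one: cleanly justifying that cluster independence really delivers $\tilde{\boldsymbol{\epsilon}}_i \perp \mathbf{x_i}$, and hence the conditional independence of $y$ and $\mathbf{x_i}$ given $\Theta_i'\mathbf{x_i}$. Everything downstream is the standard Li computation, but it rests on this reduction, and one should check that independence between raw blocks survives the within-cluster orthogonalization and that the regularized $\Sigma_{\mathbf{x_i}}$ used in practice remains positive definite so the Gram matrix inverts.
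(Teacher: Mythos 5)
Your argument is correct, but it is not the route the paper takes --- in fact the paper offers no proof of this statement at all: Theorem \ref{thm1} is quoted from \citet{li1991sliced} as Theorem \ref{thm0} restated cluster by cluster, and the only proof in the appendix is of Theorem \ref{thm4}. There, in place of your annihilator/duality argument, the paper invokes the \emph{constructive} form of Li's result from \citet{li2000high}: under the linearity condition the inverse regression curve admits the explicit representation $E(\mathbf{x_i}|y)=\Sigma_{\mathbf{x_i}}\Theta_i(\Theta_i'\Sigma_{\mathbf{x_i}}\Theta_i)^{-1}E(\Theta_i'\mathbf{x_i}|y)$, in your notation $\Theta_i=(\boldsymbol{\theta}_1^{(i)},\dots,\boldsymbol{\theta}_{k_i}^{(i)})$. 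The two routes are equivalent --- your step showing $(\Theta_i'\Sigma_{\mathbf{x_i}}\Theta_i)\mathbf{c}=0$ forces $\mathbf{c}=0$ is exactly the Gram-matrix invertibility that makes the explicit formula well defined --- but the constructive version buys something yours does not: an explicit coefficient function $\kappa(y)$, which is precisely what lets the paper chain the two SIR stages in Theorem \ref{thm4} by substituting $E(\Lambda'\mathbf{x}|y)=\Sigma_{\Lambda'\mathbf{x}}\Gamma\kappa_2(y)$ into $E(\mathbf{x}|y)=\Sigma_{\mathbf{x}}\Lambda\kappa_1(y)$; your duality argument establishes containment in the span but hands you no usable expression for the curve. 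Conversely, your Step 1 supplies something the paper leaves entirely implicit: a derivation, from the global model \eqref{1} plus cluster independence, of the within-cluster model $y=g_i(\Theta_i'\mathbf{x_i},\tilde{\boldsymbol{\epsilon}}_i)$ with $\tilde{\boldsymbol{\epsilon}}_i\perp\mathbf{x_i}$, and hence the conditional independence $y\perp\mathbf{x_i}\mid\Theta_i'\mathbf{x_i}$ under which Li's theorem applies at the cluster level --- a genuine addition, though note it requires \emph{mutual} (not merely pairwise) independence across blocks, and it tacitly identifies $\mathrm{span}\{\boldsymbol{\theta}_j^{(i)}\}$ with the span of the block components $\boldsymbol{\beta}_{k,i}$ of the global directions (this is the definitional reading of the cluster-level e.d.r.\ directions, and the transfer step only needs the conditional independence, so no harm is done). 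Your closing worry about positive definiteness of the regularized covariance is beside the point for this population-level statement, in which $\Sigma_{\mathbf{x_i}}$ is the true covariance; regularization enters only the sample algorithm.
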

Furthermore, it's not hard to see that,
\begin{cor}\label{cor1}
Under Linearity Condition \ref{con2}, $E(\mathbf{x}|y)$ is contained in the space spanned by $\Sigma_{\mathbf{x}}\boldsymbol{\lambda}_1,\Sigma_{\mathbf{x}}\boldsymbol{\lambda}_2,\dots,\Sigma_{\mathbf{x}}\boldsymbol{\lambda}_m$.
\end{cor}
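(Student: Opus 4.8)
The plan is to read Corollary \ref{cor1} as the full-vector counterpart of the cluster-wise Theorem \ref{thm1}, and to obtain it by assembling the cluster-wise containments with the help of two structural facts: the block-diagonality of $\Sigma_{\mathbf{x}}$ coming from the cluster-independence assumption, and the zero-padding that defines each $\boldsymbol{\lambda}_l$. First I would record that, because any two variables in different clusters are independent, the covariance matrix is block diagonal, $\Sigma_{\mathbf{x}}=\mathrm{blockdiag}(\Sigma_{\mathbf{x_1}},\dots,\Sigma_{\mathbf{x_c}})$, in the ordering used to stack $\mathbf{x}=(\mathbf{x_1}',\dots,\mathbf{x_c}')'$.

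The second step is the bookkeeping that makes the padding interact cleanly with $\Sigma_{\mathbf{x}}$. For $\boldsymbol{\lambda}_l=(\mathbf{0_1}',\boldsymbol{\theta}_j^{(i)\prime},\mathbf{0_2}')'$, block-diagonality gives $\Sigma_{\mathbf{x}}\boldsymbol{\lambda}_l=(\mathbf{0_1}',(\Sigma_{\mathbf{x_i}}\boldsymbol{\theta}_j^{(i)})',\mathbf{0_2}')'$; that is, $\Sigma_{\mathbf{x}}\boldsymbol{\lambda}_l$ is simply the zero-padded version of $\Sigma_{\mathbf{x_i}}\boldsymbol{\theta}_j^{(i)}$. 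Consequently $\mathrm{span}\{\Sigma_{\mathbf{x}}\boldsymbol{\lambda}_1,\dots,\Sigma_{\mathbf{x}}\boldsymbol{\lambda}_m\}$ is exactly the direct sum, over clusters $i$, of the zero-padded copies of $\mathrm{span}\{\Sigma_{\mathbf{x_i}}\boldsymbol{\theta}_j^{(i)}:j=1,\dots,k_i\}$.

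Third, I would invoke Theorem \ref{thm1} cluster by cluster to get $E(\mathbf{x_i}|y)\in\mathrm{span}\{\Sigma_{\mathbf{x_i}}\boldsymbol{\theta}_j^{(i)}:j=1,\dots,k_i\}$ for each $i$, and then stack: since $E(\mathbf{x}|y)=(E(\mathbf{x_1}|y)',\dots,E(\mathbf{x_c}|y)')'$ and each block lies in the corresponding cluster span, the assembled vector lies in the direct sum identified in the previous step, i.e. in $\mathrm{span}\{\Sigma_{\mathbf{x}}\boldsymbol{\lambda}_1,\dots,\Sigma_{\mathbf{x}}\boldsymbol{\lambda}_m\}$. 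This is the whole argument, and it is why the statement is ``not hard to see.''

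A second, more economical route is to apply Li's Theorem \ref{thm0} directly to the full vector $\mathbf{x}$ with $\Lambda=(\boldsymbol{\lambda}_1,\dots,\boldsymbol{\lambda}_m)$ playing the role of the e.d.r.-direction matrix: Linearity Condition \ref{con2} is precisely the linearity hypothesis of that theorem read for the directions $\boldsymbol{\lambda}_1'\mathbf{x},\dots,\boldsymbol{\lambda}_m'\mathbf{x}$. The one genuinely substantive point — and the step I expect to be the main obstacle — is that Li's theorem also needs the model structure $y\perp\mathbf{x}\mid\Lambda'\mathbf{x}$, i.e. that $\mathrm{span}(\Lambda)$ already contains the true e.d.r.-directions $\boldsymbol{\beta}_1,\dots,\boldsymbol{\beta}_K$. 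Verifying this requires the cluster-independence assumption together with the fact that the within-cluster SIR directions recover the cluster-wise restrictions of the $\boldsymbol{\beta}$'s. The stacking argument above sidesteps this difficulty by using only Theorem \ref{thm1} as a black box, so I would present that as the primary proof and note that it is consistent with Condition \ref{con2} because, under the elliptical-symmetry and standardization framework invoked in the Appendix, Conditions \ref{con1} and \ref{con2} hold simultaneously.
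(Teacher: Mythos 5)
Your proposal is correct, but your primary (stacking) argument is not the paper's route --- the paper's proof is exactly your ``second, more economical route.'' The paper treats Corollary \ref{cor1} as nothing more than Li's Theorem \ref{thm0} applied verbatim with $\Lambda=(\boldsymbol{\lambda}_1,\dots,\boldsymbol{\lambda}_m)$ in place of $(\boldsymbol{\beta}_1,\dots,\boldsymbol{\beta}_K)$: citing Li (2000), it records the explicit representation $E(\mathbf{x}|y)=\Sigma_{\mathbf{x}}\Lambda\kappa_1(y)$ with $\kappa_1(y)=(\Lambda'\Sigma_{\mathbf{x}}\Lambda)^{-1}E(\Lambda'\mathbf{x}|y)$, which immediately gives the containment. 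This choice is not cosmetic: the explicit coefficient function $\kappa_1(y)$ is what gets substituted into the chain of equalities proving Theorem \ref{thm4}, so the corollary is really a lemma supplying $\kappa_1$, and your stacking argument --- which yields only the containment, not a formula --- could not be plugged into that later proof without redoing work. Conversely, your stacking argument buys something the paper's route does not: it makes the clustering structure do the work explicitly (block-diagonal $\Sigma_{\mathbf{x}}$, zero-padded $\boldsymbol{\lambda}_l$, hence $\Sigma_{\mathbf{x}}\boldsymbol{\lambda}_l$ is the padded copy of $\Sigma_{\mathbf{x_i}}\boldsymbol{\theta}_j^{(i)}$), and it is elementary. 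Two caveats on your version. First, it proves the statement under Linearity Condition \ref{con1} plus cluster independence, whereas the corollary as stated assumes Condition \ref{con2}; you acknowledge this, but it means your primary proof establishes a variant of the claim rather than the claim itself, and the equivalence of the two hypotheses is only argued informally (via elliptical symmetry) in the Appendix. Second, your stated worry about needing $y\perp\mathbf{x}\mid\Lambda'\mathbf{x}$ is not actually sidestepped by stacking: Theorem \ref{thm1} carries the same implicit model-structure assumption cluster by cluster (that $y$ depends on $\mathbf{x_i}$ only through $\boldsymbol{\theta}_j^{(i)\prime}\mathbf{x_i}$), so you inherit the difficulty inside the black box rather than avoiding it --- the paper glosses over this point in precisely the same way for both the corollary and Theorem \ref{thm1}.
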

\begin{cor}\label{cor2}
Under Linearity Condition \ref{con3}, $E(\Lambda'\mathbf{x}|y)$ is contained in the space spanned by $\Sigma_{\mathbf{\Lambda'x}}\boldsymbol{\gamma}_1,\Sigma_{\mathbf{\Lambda'x}}\boldsymbol{\gamma}_2,\dots,\Sigma_{\mathbf{\Lambda'x}}\boldsymbol{\gamma}_v$.
\end{cor}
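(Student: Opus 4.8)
The plan is to prove Corollary \ref{cor2} as a direct instance of Li's theorem (Theorem \ref{thm0}, equivalently Theorem \ref{thm1}) applied to the $m$-dimensional predictor $\mathbf{z}:=\Lambda'\mathbf{x}$ in place of the original predictor $\mathbf{x}$. Conceptually there is nothing to compute: once the second-stage regression of $y$ on $\mathbf{z}$ is identified with the SIR setup of Theorem \ref{thm0}, the conclusion is a verbatim transcription. The task is therefore to verify that each ingredient of Theorem \ref{thm0} is present for $\mathbf{z}$.

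First I would record the two structural facts that $\mathbf{z}$ inherits from $\mathbf{x}$. Since every component of $\mathbf{x}$ has been standardized to zero mean, $E(\mathbf{z})=\Lambda'E(\mathbf{x})=\mathbf{0}$, so the zero-mean hypothesis under which Theorem \ref{thm1} is stated holds for $\mathbf{z}$. Moreover $\Sigma_{\mathbf{z}}=\Sigma_{\Lambda'\mathbf{x}}=\Lambda'\Sigma_{\mathbf{x}}\Lambda$, so the matrix $\Sigma_{\Lambda'\mathbf{x}}\boldsymbol{\gamma}_j$ appearing in the target conclusion is literally $\Sigma_{\mathbf{z}}\boldsymbol{\gamma}_j$. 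Next I would match the linearity hypothesis: writing $\mathbf{b}'\mathbf{z}=\mathbf{b}'\Lambda'\mathbf{x}$ and $\boldsymbol{\gamma}_j'\mathbf{z}=\boldsymbol{\gamma}_j'\Lambda'\mathbf{x}$ for $\mathbf{b}\in\mathbb{R}^m$, Linearity Condition \ref{con3} is exactly Linearity Condition \ref{con0} read for the predictor $\mathbf{z}$ with reducing directions $\boldsymbol{\gamma}_1,\dots,\boldsymbol{\gamma}_v$ (under the substitution $N\mapsto m$, $K\mapsto v$, $\boldsymbol{\beta}_j\mapsto\boldsymbol{\gamma}_j$). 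With centering, the covariance identity, and this linearity condition in place, Theorem \ref{thm0} gives $E(\mathbf{z}\mid y)\in\mathrm{span}\{\Sigma_{\mathbf{z}}\boldsymbol{\gamma}_1,\dots,\Sigma_{\mathbf{z}}\boldsymbol{\gamma}_v\}$; substituting $\mathbf{z}=\Lambda'\mathbf{x}$ and $\Sigma_{\mathbf{z}}=\Sigma_{\Lambda'\mathbf{x}}$ yields the stated $E(\Lambda'\mathbf{x}\mid y)\in\mathrm{span}\{\Sigma_{\Lambda'\mathbf{x}}\boldsymbol{\gamma}_1,\dots,\Sigma_{\Lambda'\mathbf{x}}\boldsymbol{\gamma}_v\}$.

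The only point that is not pure bookkeeping, and hence the part I would state most carefully, is that Theorem \ref{thm0} presumes a multiple-index model, namely that $y$ depends on $\mathbf{z}$ only through $\boldsymbol{\gamma}_1'\mathbf{z},\dots,\boldsymbol{\gamma}_v'\mathbf{z}$. I would justify this from the CRSIR construction in Step 6: the directions $\boldsymbol{\gamma}_1,\dots,\boldsymbol{\gamma}_v$ are produced by the second-stage SIR precisely as the e.d.r.-directions of the regression of $y$ on the pooled variates $\mathbf{z}=\Lambda'\mathbf{x}$, which is exactly the index-model form that Theorem \ref{thm0} requires. Thus the hypothesis is definitional for the object being analyzed rather than an extra assumption to be discharged, and I do not anticipate any genuine computational obstacle beyond this identification.
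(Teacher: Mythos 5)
Your proposal is correct and takes essentially the same route as the paper: the paper offers no standalone proof of Corollary \ref{cor2} (it is flagged as immediate from Li's theorem), and inside the proof of Theorem \ref{thm4} it invokes exactly your instantiation of Theorem \ref{thm0} to the predictor $\Lambda'\mathbf{x}$, via the representation $E(\Lambda'\mathbf{x}|y)=\Sigma_{\Lambda'\mathbf{x}}\Gamma\kappa_2(y)$ with $\kappa_2(y)=(\Gamma'\Sigma_{\Lambda'\mathbf{x}}\Gamma)^{-1}E(\Gamma'\Lambda'\mathbf{x}|y)$. Your explicit verification of the zero-mean, covariance, and multiple-index hypotheses for $\mathbf{z}=\Lambda'\mathbf{x}$ is more careful than the paper, which leaves those identifications implicit, but the substance is identical.
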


Based on the above results, we can conclude that
\begin{thm}\label{thm4}
Under Linearity Conditions \ref{con1}, \ref{con2} and \ref{con3}, $E(\mathbf{x}|y)$ is contained in the space spanned by $\Sigma_{\mathbf{x}}\Lambda \Gamma$.
\end{thm}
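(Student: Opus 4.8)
The plan is to chain the two inverse-regression statements already established as Corollary \ref{cor1} and Corollary \ref{cor2}, using the ``bridging'' identities that relate the moments of $\mathbf{x}$ to those of the pooled variate $\Lambda'\mathbf{x}$. Corollary \ref{cor1} places $E(\mathbf{x}|y)$ in the column space of $\Sigma_{\mathbf{x}}\Lambda$, while Corollary \ref{cor2} places $E(\Lambda'\mathbf{x}|y)$ in the column space of $\Sigma_{\Lambda'\mathbf{x}}\Gamma$. The final statement asks us to refine the first of these from $\Sigma_{\mathbf{x}}\Lambda$ down to the smaller space $\Sigma_{\mathbf{x}}\Lambda\Gamma$, and the second SIR pass (encoded in Corollary \ref{cor2}) is precisely the extra information that lets us do so.

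First I would record the two coordinate representations. By Corollary \ref{cor1} there is a vector-valued function $\mathbf{w}(y)\in\mathbb{R}^{m}$ with
\begin{equation}
E(\mathbf{x}|y)=\Sigma_{\mathbf{x}}\Lambda\,\mathbf{w}(y),
\end{equation}
and by Corollary \ref{cor2} there is $\mathbf{u}(y)\in\mathbb{R}^{v}$ with
\begin{equation}
E(\Lambda'\mathbf{x}|y)=\Sigma_{\Lambda'\mathbf{x}}\,\Gamma\,\mathbf{u}(y).
\end{equation}
The next step is to exploit that $\Lambda$ is a fixed matrix, so that $E(\Lambda'\mathbf{x}|y)=\Lambda'E(\mathbf{x}|y)$ and $\Sigma_{\Lambda'\mathbf{x}}=\Lambda'\Sigma_{\mathbf{x}}\Lambda$. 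Substituting the first representation gives $\Lambda'E(\mathbf{x}|y)=\Lambda'\Sigma_{\mathbf{x}}\Lambda\,\mathbf{w}(y)=\Sigma_{\Lambda'\mathbf{x}}\,\mathbf{w}(y)$, and comparing with the second representation yields
\begin{equation}
\Sigma_{\Lambda'\mathbf{x}}\big(\mathbf{w}(y)-\Gamma\,\mathbf{u}(y)\big)=\mathbf{0}.
\end{equation}
If $\Sigma_{\Lambda'\mathbf{x}}$ is nonsingular this forces $\mathbf{w}(y)=\Gamma\,\mathbf{u}(y)$, and feeding this back into the first representation gives $E(\mathbf{x}|y)=\Sigma_{\mathbf{x}}\Lambda\Gamma\,\mathbf{u}(y)$, so that $E(\mathbf{x}|y)$ lies in the span of $\Sigma_{\mathbf{x}}\Lambda\Gamma$, as required.

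The main obstacle I anticipate is justifying the invertibility of $\Sigma_{\Lambda'\mathbf{x}}=\Lambda'\Sigma_{\mathbf{x}}\Lambda$, which is what allows the cancellation in the last display. This holds provided $\Sigma_{\mathbf{x}}$ is positive definite (the standing population assumption, guaranteed after the regularization step) and $\Lambda$ has full column rank $m$. The latter is natural by construction: within each cluster the directions $\boldsymbol{\theta}_1^{(i)},\dots,\boldsymbol{\theta}_{k_i}^{(i)}$ are linearly independent generalized SIR eigenvectors, and the embedded blocks $\boldsymbol{\lambda}_l$ coming from different clusters have disjoint supports (their nonzero entries occupy the coordinate block of a single cluster), so the full collection $\boldsymbol{\lambda}_1,\dots,\boldsymbol{\lambda}_m$ is linearly independent. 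I would therefore insert a short remark verifying full column rank of $\Lambda$ before performing the cancellation; once that is in place the remaining argument is purely linear-algebraic, and no analytic input beyond the three linearity conditions (which enter only through Corollaries \ref{cor1} and \ref{cor2}) is needed.
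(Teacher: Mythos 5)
Your proposal is correct and follows essentially the same route as the paper's proof: both chain Corollary \ref{cor1} and Corollary \ref{cor2} through the bridge identities $E(\Lambda'\mathbf{x}|y)=\Lambda'E(\mathbf{x}|y)$ and $\Sigma_{\Lambda'\mathbf{x}}=\Lambda'\Sigma_{\mathbf{x}}\Lambda$, the paper doing so via the explicit coefficient formula $\kappa_1(y)=(\Lambda'\Sigma_{\mathbf{x}}\Lambda)^{-1}E(\Lambda'\mathbf{x}|y)$ from \citet{li2000high}, where you instead work with abstract coefficient functions and cancel $\Sigma_{\Lambda'\mathbf{x}}$. Your added verification that $\Lambda'\Sigma_{\mathbf{x}}\Lambda$ is invertible (full column rank of $\Lambda$ from the disjoint-support block structure, plus positive definiteness of $\Sigma_{\mathbf{x}}$) addresses a point the paper relies on implicitly---its $\kappa_1(y)$ presupposes that inverse exists---so this is a welcome tightening rather than a deviation.
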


\begin{proof}

\citet{li2000high} proved Theorem \ref{thm0}, which is the same as Corollary \ref{cor1} in different notations, by showing that $E(\mathbf{x}|y)$ can be written as,
\begin{equation*}\label{3}
E(\mathbf{x}|y)=\Sigma_{\mathbf{x}}\Lambda\kappa_1(y),
\end{equation*}
where $\kappa_1(y)=(\Lambda'\Sigma_{\mathbf{x}}\Lambda)^{-1}E(\Lambda'\mathbf{x}|y)$.

Similarly, under Condition \ref{con3},
\begin{equation*}
E(\Lambda'\mathbf{x}|y)=\Sigma_{\Lambda'\mathbf{x}}\Gamma\kappa_2(y),
\end{equation*}
where $\kappa_2(y)=(\Gamma'\Sigma_{\Lambda'\mathbf{x}}\Gamma)^{-1}E(\Gamma'\Lambda'\mathbf{x}|y)$ and $\Sigma_{\Lambda'\mathbf{x}}=\Lambda'\Sigma_{\mathbf{x}}\Lambda$.

Therefore,
\begin{eqnarray*}
E(\mathbf{x}|y)=\Sigma_{\mathbf{x}}\Lambda\kappa_1(y)&=&\Sigma_{\mathbf{x}}\Lambda(\Lambda'\Sigma_{\mathbf{x}}\Lambda)^{-1}E(\Lambda'\mathbf{x}|y)\nonumber\\
&=&\Sigma_{\mathbf{x}}\Lambda(\Lambda'\Sigma_{\mathbf{x}}\Lambda)^{-1}\Lambda'\Sigma_{\mathbf{x}}\Lambda \Gamma\kappa_2(y)\nonumber\\
&=&\Sigma_{\mathbf{x}}\Lambda \Gamma\kappa_2(y).
\end{eqnarray*}
That implies that $E(\mathbf{x}|y)$ is in the e.d.r. space spanned by $\Sigma_{\mathbf{x}}\Lambda \Gamma$.

\end{proof}

\end{document}